\documentclass[10pt,twocolumn,twoside]{IEEEtran}
\usepackage[colorlinks,urlcolor=blue,linkcolor=blue,citecolor=blue]{hyperref}
\usepackage{array}

\usepackage[subrefformat=parens,labelformat=parens,caption=false,font=footnotesize]{subfig}
\usepackage{amsmath}
\usepackage[nolist,nohyperlinks]{acronym}
\usepackage{tikz}
\usetikzlibrary{arrows.meta, positioning}
\usepackage{graphicx}
\usepackage{xcolor}

\begin{acronym}
    \acro{4G}{fourth generation}
    \acro{5G}{fifth generation}
    \acro{BPP}{binomial point process}
\acro{CDF}{cumulative density function}
    \acro{CF}{closed-form}
    \acro{ECDF}{empirical cumulative distribution function}
    \acro{GPS}{global positioning system}
    \acro{GE}{group exploration}
    \acro{GNSS}{global navigation satellite system}
    \acro{LCB}{lower confidence bound}
    \acro{LLR}{log-likelihood ratio}
    \acro{MAB}{multi-armed bandit}
    \acro{MBS}{macro base station}
    \acro{MEC}{mobile-edge computing}
    \acro{mIoT}{massive internet of things}
    \acro{MIMO}{multiple input multiple output}
    \acro{mm-wave}{millimeter wave}
    \acro{mMTC}{massive machine-type communications}
    \acro{MS}{mobile station}
    \acro{MVUE}{minimum-variance unbiased estimator}
    \acro{NLOS}{non line-of-sight}
    \acro{OFDM}{orthogonal frequency division multiplexing}
    \acro{PAC}{probably approximately correct}
    \acro{PDF}{probability density function}
    \acro{PGF}{probability generating functional}
    \acro{PLCP}{Poisson line Cox process}
    \acro{PLT}{Poisson line tessellation}
    \acro{PLP}{Poisson line process}
    \acro{PPP}{Poisson point process}
    \acro{PV}{Poisson-Voronoi}
    \acro{QoS}{quality of service}
    \acro{RAT}{radio access technique}
    \acro{RL}{reinforcement-learning}
    \acro{RSSI}{received signal-strength indicator}
    \acro{BS}{base stations}
    \acro{IoT}{internet of things}
    \acro{IIoT}{industrial internet of things}
    \acro{RF}{radio frequency}
    \acro{WSN}{wireless sensor node}
    \acro{SWIPT}{simultaneous wireless information and power transfer}
    \acro{UE}{user equipment}  
    \acro{wpt}{wireless power transfer}
    \acro{SINR}{signal to interference plus noise ratio}
    \acro{SNR}{signal to noise ratio}
    \acro{SIR}{signal to interference ratio}
    \acro{JSP}{joint success probability}
    \acro{PAoI}{peak age of information}
    \acro{AoI}{age of information}
    \acro{EH}{energy harvesting}
    \acro{UB}{upper bound}
    \acro{LB}{lower bound}
    \acro{AC}{actual case}
    \acro{L}{linear}
    \acro{NL}{non linear}
    \acro{P}{preemptive}
    \acro{NP}{non-preemptive}
    \acro{L EH}{linear energy harvesting}
    \acro{NL EH}{non linear energy harvesting}
\end{acronym}

\usepackage{graphicx,wrapfig,lipsum,subcaption}
\usepackage{rotating}
\usepackage{amsmath, amssymb, amsthm}
\usepackage{stmaryrd}
\usepackage{verbatim}
\usepackage{url}
\usepackage[utf8]{inputenc}
\usepackage[english]{babel}
\newtheorem{theorem}{Theorem}[]

\newtheorem{proposition}{Proposition}[]

\newtheorem{lemma}[]{Lemma}

\usepackage{multicol}

\usepackage{algorithm}
\usepackage{algpseudocode}
\usepackage{scalerel}
\usepackage{setspace}
\newtheorem{definition}{Definition}
\usepackage[noadjust]{cite}
\usepackage{booktabs}
\usepackage{subcaption}
\usepackage{bbm}
\usepackage[normalem]{ulem}
\usepackage{color}
\usepackage{dsfont}
\usepackage{bm}
\usetikzlibrary{automata}
\usetikzlibrary{arrows}
\usetikzlibrary{shapes.geometric, arrows}
\usepackage[]{footmisc}
\usetikzlibrary{positioning}
\usetikzlibrary{calc}
\hypersetup{nolinks=true}
\allowdisplaybreaks
\usepackage{mathrsfs}

\usepackage{caption}
\usepackage{mathtools}
\usetikzlibrary{decorations.pathreplacing}
\usepackage{amsthm}
\newtheorem{remark}{Remark}
\usepackage{tabularx}
\usepackage{amssymb}
\usepackage{acronym} 
\algrenewcommand\algorithmicrequire{\textbf{Input:}}

\begin{document}
\title{Information Age-Controllability Trade-offs in Communication-Constrained Networks}
\author{
    Songita Das, 
    Gourab Ghatak, 
    Chen Quan, 
    and Geethu Joseph 
\thanks{The preliminary version of this paper is submitted to the International Conference on Signal Processing and Communications (SPCOM), 2026. 

S. Das is with the Bharti School of Telecommunication Technology and Management, Indian Institute of Technology Delhi, New Delhi 110016, India (email: bsz228102@iitd.ac.in). G. Ghatak is with the Department of Electrical Engineering, Indian Institute of Technology Delhi, New Delhi 110016, India (email: gghatak@ee.iitd.ac.in). C. Quan and G. Joseph are with the Faculty of Electrical Engineering, Mathematics, and Computer Science, Delft University of Technology, Netherlands (emails: \{c.quan, g.joseph\}@tudelft.nl).}
}
\maketitle
\begin{abstract}
We investigate the trade-off between controllability, channel access, and age-related performance in a wireless network of control systems. Controllers share a random-access channel to transmit control inputs to actuators over slotted blocks. We measure reliable control via block controllability, where a block is controllable if it contains a required number of consecutive successful transmissions. In parallel, we capture information freshness via the age of information. To enable efficient allocation of channel resources over time, we introduce adaptive access probabilities at the block level, prioritizing controllers that have not yet achieved controllability. We then derive closed-form expressions for block controllability probability, the peak latency between inter-block consecutive successes, and peak age of information. We further characterize the peak control latency, defined as the time between consecutive controllable blocks. Finally, we optimize access probabilities to jointly balance controllability and age-related metrics. Numerical results illustrate the effectiveness of the proposed adaptive access policies in managing this trade-off in interference-limited wireless control networks.
\end{abstract}
\begin{IEEEkeywords}
Wireless control-communication co-design, age of information, controllability, stochastic geometry.
\end{IEEEkeywords}
\section{Introduction}
Joint control-communication over wireless networks is central to networked control systems, the internet of things (IoT), and cyber–physical systems, where controllers, sensors, and actuators share a common communication infrastructure. In these systems, control performance is tightly linked to constraints such as packet drops, delays, limited bandwidth, and interference. Hence, channel access and scheduling directly affect the timeliness of state information and control inputs, and should be optimized for control objectives rather than treated as a separate layer. This design of wireless channel access and scheduling jointly with control objectives is the main focus of our paper. 

Beyond classical metrics like throughput and delay, the freshness of information at the controller and actuator is crucial for stabilizing and regulating dynamical systems. The \ac{AoI} captures this freshness by measuring the time elapsed since the latest received update was generated~\cite{Yates,Soysal}. Unlike delay, which concerns individual packets, \ac{AoI} reflects how current the available information is. Therefore, it is especially relevant for real-time control and monitoring~\cite{Kaul} in applications such as industrial automation, intelligent infrastructure, and autonomous transportation~\cite{C1_8272319,C2_552705}. However, control performance and information freshness are jointly shaped by channel access, spatial interference, and transmission reliability, creating fundamental trade-offs. This motivates our framework that jointly analyzes and optimizes communication access and control performance using \ac{AoI}-based metrics \cite{Yates, Abd-Elmagid1} and controllability criteria.
\subsection{Related Work}
We briefly review prior work at the intersection of wireless communication, information freshness, and control \cite{4118465, 9910575} to reveal limitations of existing models and motivate a block-structured, controllability-aware framework. We first discuss studies that incorporate \ac{AoI} into control systems, then review scheduling and access strategies in wireless networks, and finally highlight the gap between slot-based \ac{AoI} optimization and block-level controllability \cite{ghatak2024channelaccessstrategiescontrolcommunication } requirements. 

\ac{AoI} quantifies information freshness in control and monitoring systems~\cite{Yates,Soysal}. Several works analyze its impact on networked control performance. Wang {\it et al.}~\cite{D9305697} derived a relationship between \ac{AoI}, control cost, and retransmission probability in industrial IoT. It showed that low \ac{AoI} improves stability but increases energy consumption due to frequent retransmissions. Hahn {\it et al.}~\cite{2_9599512} incorporated \ac{AoI} prediction into distributed model predictive control, enabling controllers to adapt to anticipated staleness rather than worst-case delays, improving robustness in vehicle platooning. Wen {\it et al.}~\cite{3_10210423} generalized \ac{AoI} to age of task by including sensing, transmission, and computation delays, and proposed event-triggered policies that trade off control cost against end-to-end task freshness. Collectively, these works underscore the role of information freshness but focus on update timing rather than on achieving control objectives. Yet, they do not model the need for consecutive successful transmissions that are common in systems where a sequence of control inputs must be reliably delivered to reach a target state. Consequently, \ac{AoI} evolution is decoupled from probabilistic controllability and the dynamics of repeated transmission attempts. 

Several other studies have explored scheduling and power control mechanisms to improve \ac{AoI} in wireless networks. Qiao {\it et al.}~\cite{8_8772205} optimized transmit power using transmission success/failure feedback to minimize \ac{AoI}. Also, Das {\it et al.}~\cite{J10778324,K11017745} characterized trade-offs between peak \ac{AoI} (PAoI) and joint success probability under slot-based time partitioning in energy-harvesting systems. Learning-based approaches~\cite{G10202236,B6293884,I10778310} have further investigated adaptive scheduling and prediction techniques to improve freshness or reliability. However, these works remain confined to slot-wise or queue-based models and do not account for the fact that controllability may require a run of successive successful transmissions rather than isolated packet deliveries. From a control-centric perspective, Ayan {\it et al.}~\cite{5_10221713} proposed finite-horizon scheduling policies to minimize estimation error under the assumption that control updates can be persistently delivered. In contrast, our work addresses the more fundamental question of when controllability can be achieved and how often it can be maintained over unreliable wireless channels, introducing block controllability and peak control latency as performance metrics. 

To facilitate the analysis, we leverage foundational techniques and analytical tools developed in the communications literature. On the communication side, foundational works by Baccelli {\it et al.}~\cite{A1580787} and Haenggi~\cite{C7345601} analyzed random access protocols and link reliability using stochastic geometry. While these studies provide powerful tools for modeling interference and access contention, they do not consider control-specific requirements such as successive packet delivery, state evolution under packet loss, or information freshness constraints. In short, most communication-control co-design~\cite{E9493202,F9910575} and \ac{AoI} scheduling~\cite{H10354462} studies optimize slot-level reliability or freshness but overlook block-level dynamics required for controllability. It motivates our unified framework that jointly models stochastic access, block controllability, multiple \ac{AoI} metrics, and peak control latency. 
\subsection{Contributions}
We consider a wireless control system where controller locations follow a homogeneous \ac{PPP}. Time is divided into blocks of fixed-length slots. Using both block-wise and slot-wise Aloha access, the model captures the effects of spatial interference, fading, and random channel access on transmission success, information freshness, and control performance. The main contributions are:
\begin{itemize}
\item We introduce a set of block-level performance metrics that jointly capture reliability and freshness in wireless control systems. In particular, we first define block controllability as the event that at least 
$v$ consecutive successful transmissions occur within a block, where $v$ denotes the controllability index of the underlying control system. We further introduce the peak control latency, which measures the number of blocks elapsed between consecutive blocks in which controllability is achieved, providing a block-scale notion of control reliability.
\item We characterize these metrics in Poisson-distributed control networks with block-structured transmissions. Closed-form expressions are derived to compute the probability of block controllability using both block access and slot access schemes via run-length analysis. In addition, we characterize the expected peak control latency by quantifying the number of blocks between consecutive blocks where controllability is achieved. 
\item We investigate the interplay between control performance and information freshness by characterizing the peak latency between inter-block consecutive successes
and \ac{PAoI}. For both paradigms, we derive expressions for the expected latency and \ac{PAoI} under a non-preemptive queue discipline.
\item We propose an optimization framework that dynamically adapts block access, slot access before controllability, and slot access after controllability. We jointly maximize block controllability while minimizing latency and age-related metrics through \ac{CDF}-based cost function. We present a grid-search-based algorithm that computes the optimal per-block access probabilities.
\item  Through extensive numerical results, we demonstrate the evolution of optimal access probabilities with the block index and the controllability requirement $v$. The results
highlight the fundamental trade-offs between block Aloha and slot Aloha access. The proposed adaptive policy favors  block-level access during the pre-controllability phase to improve the likelihood of achieving controllability, while slot-level access in both the pre and  post-controllability regime is more effective in reducing the peak latency between inter-block consecutive successes and the \ac{PAoI}.
\end{itemize}
Compared to our preliminary conference version, which focused primarily on block controllability and peak control latency with adaptive access optimization, this paper substantially extends both the analytical framework and the performance characterization. Specifically, we incorporate information freshness through the peak latency between inter-block consecutive successes and \ac{PAoI}. We further develop a unified communication--control optimization framework that jointly balances controllability, latency, and age-related metrics using adaptive block-level and slot-level access policies together. We also provide regime-specific latency characterizations and significantly expanded numerical results that reveal the evolution of optimal policies across blocks and different controllability indices, offering insights into the trade-offs between controllability, latency, and information freshness.

For ease of reference, the list of notations used throughout the paper is summarized in Table~\ref{tab:key_notations}.
\begin{table}[t]
\renewcommand{\arraystretch}{1.15}
\centering
\small
\begin{tabular}{p{0.23\linewidth} p{0.7\linewidth}}
\hline
\textbf{Symbol} & \textbf{Description} \\
\hline
$\Phi$, $\lambda$ & PPP of controllers and its spatial density \\
$T$, $t$, $j$ & Block length, slot index, controller index; $j=0$ denotes typical controller \\
$r_0$, $r_j$ & Distance from controller to its actuator \\
$i,\,k ,\, k\!-\!\kappa$ & Generic, current, and most recent prior block indices with at least one successful transmission \\
$\kappa$, $v$ & Number of consecutive blocks since last successful block and controllability index (run length)\\
$\mathbf{x}(t)$, $\hat{\mathbf{x}}(t)$, $\mathbf{x}_{\rm des}$ & System state, its estimate and desired state \\
$\mathbf{u}(t)$, $\bar{\mathbf{u}}$ & Control and steady-state control inputs \\
$\mathbf{A}$, $\mathbf{B}$ & System and input matrices \\
$\xi$, $N_o$, $\gamma$, $\Upsilon(t)$ & Transmit power, noise power, SINR threshold and SINR at time $t$ \\
$\mathbb{I}_j(S_t), \mathbb{I}_j(B_k)$ &  Slot and block access indicators of controller $j$ \\
$O_k,  \tilde{O}_k$ & Controllability state and instantaneous block-controllability indicators of controller $j$ \\
$P_{{\rm O}_{k}}, P_{\tilde{\rm O}_k}$ & Controllability state probability and per-block controllability probability \\
$\pi_k$ & First-time block controllability probability \\
$\Phi^{-}$ & Controllers not yet block-controllable \\
$\Phi_{\rm B}, \Phi_{\rm S}, \Phi_{\rm C}$ & Block, slot, and post-controllability access sets \\
$\lambda_{{\rm B}_k}, \lambda_{{\rm S}_k}, \lambda_{{\rm C}_k}$ & Effective densities of controllers in each access mode \\
$\delta_{{\rm B}_k}, \delta_{{\rm S}_k}, \delta_{{\rm C}_k}$ & Access probabilities in block $k$ \\
$G(t)$, $Z(k)$ & Transmission success indicator and the indicator that block $k$ contains at least one successful transmission \\
$L_k,\, l$ & Maximum burst length of consecutive successful transmissions in block $k$ and expected number of successful inputs per successful block \\
$p_{k}, q_k, \varrho_k$ & Per-slot success probability, failure probability ($q_k=1-p_k$), and conditional slot success probability given channel access \\
$\chi(\cdot)$ & Run-length–based controllability function \\
$\Delta^{\rm P}_{\kappa,k}, \mathcal{L}^{\rm P}_{\kappa,k}$ 
& peak age and peak latency of the first successful input in block $k$ \\
$\Theta^{\rm pl}_{k},\, \Theta^{\rm pa}_{k},\, \Theta^{\rm pcl}_{k}$ 
& Peak latency between inter-block consecutive successes, peak age, and peak control latency in block $k$ \\
$P_{\Theta^{\rm curr}_{k}}(\eta)$ 
& CDF of the current-block latency contribution in block $k$, evaluated at threshold $\eta$ \\
$P_{\Theta^{\rm pcl}_{k}}(\eta)$ 
& CDFs of peak control latency in block $k$, evaluated at threshold $\eta$ \\
\hline
\end{tabular}
\caption{Key Notations}
\label{tab:key_notations}
\end{table}
\section{Network Model} 
We consider a shared wireless network that supports multiple controlled systems and their corresponding remote controllers. Each controlled system consists of a sensor, a plant, and an actuator. The network comprises multiple controller–controlled system pairs, where the controller locations are modeled as a homogeneous \ac{PPP}, $\Phi$  with density $\lambda$. The controller–actuator pairs are assumed to be oriented uniformly at random, which, together with the \ac{PPP} of controller locations, form a Poisson dipole process. 

The controllers use a reliable link\footnote{We assume dedicated resources for the sensor–controller link, since this communication occurs sporadically and involves low payloads, whereas the high-payload control channel is shared.} to receive state information at a regular interval of $T$ time slots, referred to as blocks. Then, it calculates control inputs to steer the controlled system to a desired state~\cite{I10778310, 11202619}. These control inputs are transmitted to the actuator over a shared wireless link, and the actuator sends back an acknowledgment. 
The success of each transmission at time $t$ is determined by \ac{SINR} $\Upsilon (t)$,  where acknowledgments $G(t)$ are used by the controller to recompute control inputs if transmission fails. Since a shared wireless communication channel is assumed, not all systems can access the channel simultaneously, as this would degrade performance. Thus, we investigate the channel access policy of each controlled system, which directly influences the SINR and, ultimately, the network's control performance. 

We adopt a probabilistic channel access protocol to manage transmissions and optimize system performance. For tractability, we investigate the system from the perspective of the typical controlled system at the origin. By Slivnyak's theorem, this does not affect the distribution of the remaining network systems~\cite{chiu2013stochastic}. The typical controlled system follows a discrete-time linear dynamical system:
\begin{equation}
    \begin{aligned}
\mathbf{x}(t+1)= \mathbf{Ax}(t)+\mathbf{Bu}(t)+\mathbf{v}(t),
\label{1}
    \end{aligned}
\end{equation}
where $\mathbf{x}(t) \in \mathbb{R}^{n}$ is the state at time slot $t$, $\mathbf{u}(t) \in \mathbb{R}^{m}$  is the control input, $\mathbf{v}(t) \in \mathbb{R}^{n}$ is the zero-mean process noise, and $\mathbf{A} \in \mathbb{R}^{n\times n}$ and $\mathbf{B} \in \mathbb{R}^{n\times m}$ are the state and input matrices of the system. The system drives and retains the system state at the desired state $\mathbf{x}_{\rm des}$ with a suitable choice of inputs $\mathbf{u}(t)$.
\subsection{Propagation Model and Transmission Success}
Each wireless link experiences a Rayleigh-distributed fast fading with parameter 1. The fast fading is independent across time and across all links. Let $\alpha$ be the path-loss exponent of each channel, $N_{\rm o}$ be the channel noise power density, and $\xi$ be the transmit power. The \ac{SINR} $\Upsilon(t)$ at the typical actuator~is
\begin{equation}
    \begin{aligned}
\Upsilon (t)=\frac{\mathbb{I}_{0}(S_{t})\xi  \left | h_{0}(t)\right |^{2} r_{0}^{-\alpha}}{N_{o}+\sum _{j \in \Phi \setminus \{0\}} \mathbb{I}_{j}(S_{t})\xi  \left | h_{j}(t)\right |^{2} r_{j}^{-\alpha}} .
\label{eq:2}
    \end{aligned}
\end{equation}
where $h_{j}(t)$ is the channel fading between controller $j$ and the typical actuator. Here, $j = 0$ is the index of the typical controller and $r_{j}$ denotes the distance between the $j$-th controller and the typical actuator located at the origin. Also, $\mathbb{I}_{j}(S_{t})\in \{0,1\}$ is the channel access state of controller $j$ at time $t$ where $\mathbb{I}_{j}(S_{t})=1$ indicates that controller $j$ transmits and $\mathbb{I}_{j}(S_{t})=0$ indicates no transmission. If $\Upsilon (t)$ exceeds a threshold $\gamma  > 0$ (depends on the application and receiver hardware), the actuator correctly decodes the received signal, thereby implying a successful transmission. So, the transmission acknowledgment sent by the typical actuator to controller $G(t)$ is
\begin{equation}
    \begin{aligned}
G(t)=\begin{cases}
1, &\text{  if  }  \mathbb{I}_{0}(S_{t})\Upsilon (t)>\gamma,\\
0, &\text{ otherwise},
\end{cases}
\label{3}
    \end{aligned}
\end{equation}
where $G(t)=1$ denotes success and $G(t)=0$ denotes failure. In this work, we leverage the statistics of sequence $\{G(t); t \geq 0\}$ to characterize and optimize control, latency and \ac{PAoI}.
\subsection{Control System}
The transmission follows a block-based scheme, where a block comprises $T$ time slots and block $k$ spans the interval
$t \in \{kT, \ldots, (k+1)T-1\}$. The controller periodically receives the system state at the beginning of every block,
$t = kT$. Using the received state $\mathbf{x}(kT)$ and \eqref{1}, it computes an estimate $\hat{\mathbf{x}}(t)$ of the state $\mathbf{x}(t)$ for $kT\leq t\leq(k+1)T-1$ as
\begin{equation}
\begin{aligned}
\hat{\mathbf{x}}(t)
= \mathbf{A}^{t-kT}\mathbf{x}(kT)
+ \sum_{\tau=kT}^{t-1} \mathbf{A}^{t-\tau-1} G(\tau)\mathbf{B}\mathbf{u}(\tau),
\label{4}
\end{aligned}
\end{equation}
as the process noise is zero mean~\cite{I10778310, 11202619}. At $t=kT$, it designs $v$ control inputs ${\mathbf{u}}(t+ \tau), \tau=0,1,\ldots,v-1 $ to drive the predicted system state to the desired target, i.e., $\hat{\mathbf{x}}(t+v) = \mathbf{x}_{\rm des}$. Here $v$ is the controllability index of the system, and the  required control sequence is given by~\cite{I10778310},
\begin{equation}
    \begin{aligned}
\begin{bmatrix}
{\mathbf{u}}(t)^\top\; {\mathbf{u}}(t+1)^\top \ldots {\mathbf{u}}(t+v-1)^\top
\end{bmatrix} ^\top\!\!\!=\boldsymbol{\Psi}^{\dagger}\left [ \mathbf{x}_{\rm des}-\mathbf{A}^{v}\hat{\mathbf{x}}(t) \right ].
\label{6}
    \end{aligned}
\end{equation}
Here, $\boldsymbol{\Psi} = \begin{bmatrix}
\mathbf{ A}^{v-1}\mathbf{B} & \mathbf{A}^{v-2}\mathbf{B} & \dots & \mathbf{B}  
\end{bmatrix}$ is the controllability matrix with $\boldsymbol{\Psi}^{\dagger }$ its pseudo-inverse. 
The controller continues to transmit the control inputs up to time $kT+l$ until either when transmission fails ($G(kT + l) = 0$) or all $v$ inputs are successfully transmitted ($l = v - 1$). If a transmission fails for some $l < v$, the remaining inputs are discarded and a new set of $v$ inputs is recomputed using \eqref{6} with $t = kT + l + 1$ and the process repeats. Once all the $v$ control inputs from \eqref{6} successfully reach the actuator, the state estimate $\hat{\mathbf{x}}(t)$ at the controller reaches $\mathbf{x}_{\rm des}$. 

Once the controller estimate is $\mathbf{x}_{\rm des}$, the controller sends dummy packets till the end of the block.  A dummy packet does not contain control information; it allows the actuator to continue sending acknowledgments, supporting interference monitoring and adaptive channel access. Meanwhile, the actuator applies steady-state inputs till the end of the block to retain the state estimate $\hat{\mathbf{x}}(t)$ at the desired state $\mathbf{x}_{\rm des}$. These steady-state inputs are pre-stored at the actuator. We note that steady-state inputs do not account for process noise; applying them alone can lead to unbounded state deviation from $\mathbf{x}_{\rm des}$ over the full horizon. To compensate for this deviation, the controller recomputes control inputs at the start of each block based on the current state. This periodic correction keeps the state deviation bounded, with its variance growing proportionally to the block length rather than over the entire horizon.

\section{Communication-Control Co-design}
We introduce a block-structured communication framework that prioritizes controllers according to their controllability status to jointly manage reliability and information freshness. To this end, we consider four key metrics: block controllability, peak latency between inter-block consecutive successes, \ac{PAoI}, and peak control latency. The first metric captures instantaneous controllability, while the other three latency-related metrics characterize the temporal evolution of successful communication, information freshness, and controllability.

The controllability is the system's ability to drive its state estimate $\hat{\mathbf{x}}(t)$ to the desired state $\mathbf{x}_{\rm des}$, which occurs only if $v$ consecutive controller transmissions are successful.
\begin{definition}[Controllability]\label{Def:1}
The typical system is said to be {\it block controllable} in block $k$ if there is a run of at least $v$ ones in the sequence $\{G(kT), G(kT +1), \ldots , G((k + 1) T - 1 )\}$.
\end{definition}

While controllability reflects whether control is achieved within a block, it does not characterize the temporal spacing between successful receptions across blocks. To capture this communication latency, we define the peak latency as the time between consecutive inter-block successes. However, to optimize the access probability, we need to associate this metric with a given block $k$. The peak latency is generated only for blocks that contain at least one successful transmission. Blocks without successful transmissions do not create a new peak latency; they only increase the latency until the next successful update. The resulting per-block metric is illustrated in Fig.~\ref{3statement2.drawio}.
\begin{definition}[Peak latency for the first input in a block]\label{def:pl}
For any block $k$ with at least one success, let $\kappa$ denote the number of blocks since the last successful block, and $W_{k-\kappa}$ denote the number of remaining failure slots until the end of that block. Also, let $X_k$ denote the number of failed slots before the first success in block $k$. The peak latency of the first successfully received input in block $k$ is then defined as
\begin{equation}
 \mathcal{L}^{\rm P}_{\kappa,k} = T(\kappa-1)+W_{k-\kappa}+X_k+1.
\end{equation}
\end{definition}

\begin{figure}[t]
\centering   \includegraphics[width=\linewidth]{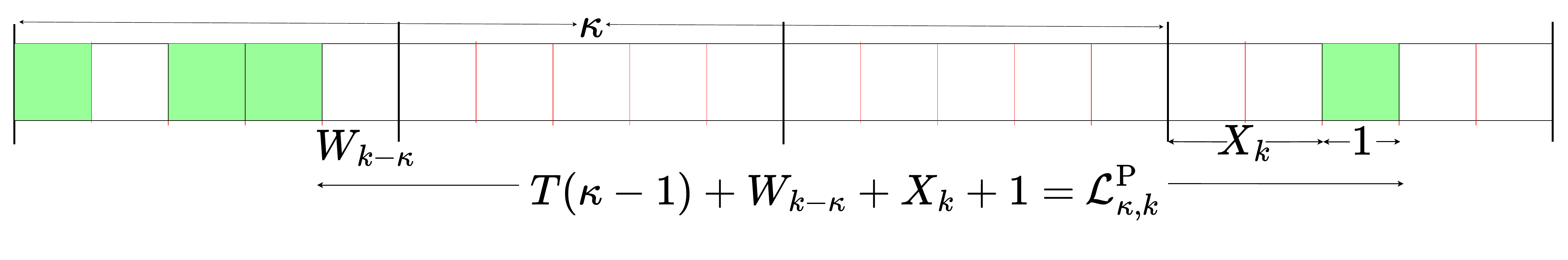}
 \caption{Sample \ac{AoI} evolution under peak latency between inter-block consecutive successes with block length $  T=5  $, where successful transmissions are shown in green. After the last success in block $k-\kappa$, $W_{k-\kappa}$ failure slots occur, followed by $\kappa-1$ failed blocks contributing $T(\kappa-1)$ slots. In block $k$, the first success occurs after $X_k$ failures at slot $X_k+1$, giving latency $\mathcal{L}^{\rm P}_{\kappa,k} = T(\kappa-1) + W_{k-\kappa} + X_k + 1$.}
\label{3statement2.drawio}
\end{figure}   

The next latency metric, \ac{PAoI}, captures the information freshness at the actuator. We recall that the controller receives a fresh measurement from the sensor only at the beginning of each block $k$ (at time $t=kT$). All control inputs generated and transmitted during block $k$ are therefore computed based on the system state sensed at time \(kT\). \ac{PAoI} measures the maximum information staleness immediately before a successful input reception, i.e., the maximum elapsed time between the generation time of the previously received successful input and the reception time of the current input. 
\begin{definition}[\ac{PAoI}]
Let $\iota=1,2,\dots$ index successfully received control inputs in the order of their reception at the actuator. If input $\iota$ is successfully received at slot $t$, the \ac{PAoI} associated with that control input is 
\begin{equation}
\Delta^{\rm P}_{\iota}
=
t - t_{\iota-1},
\end{equation}
where $t_{\iota(t-1)}$ is the generation time of the most recently received successful packet prior to the current reception.
\end{definition}
To express PAoI in block form, note that it is generated only in blocks with at least one success; failed blocks only increase the \ac{AoI} to the next update. The resulting per-block metric is illustrated in Fig.~\ref{state3}.
\begin{definition}[PAoI of the first input in a block] \label{def:paoi}
For any block $k$ with at least one success ($Z(k)=1$), let $\kappa$ denote the number of blocks since the last successful block, and let $X_k$ denote the number of failed slots before the first success in block $k$. The PAoI of the first successfully received input in block $k$ is defined as
\begin{equation}
\Delta^{\rm P}_{\kappa,k} = \kappa T + (X_k + 1).
\end{equation}
\end{definition}

\begin{figure}[t]
\centering   \includegraphics[width=\linewidth]{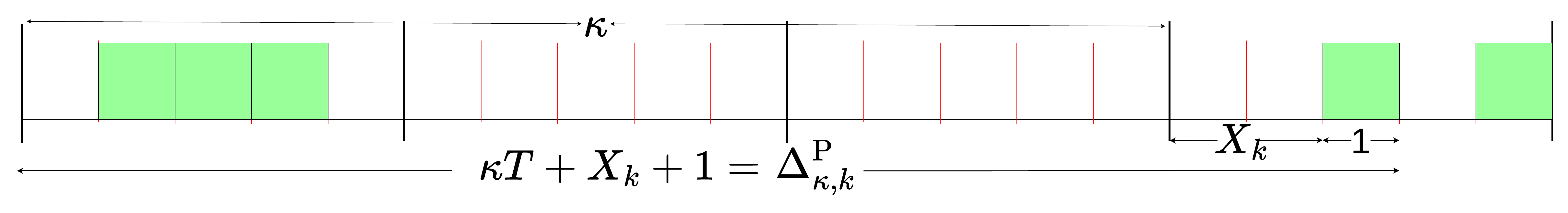}
\caption{Sample \ac{AoI} evolution in non-consecutive blocks with block length $T=5$, where successful transmissions are shown in green. In the $k$th block, the first successful transmission occurs after an initial failure run of length $X_{k}$, so $\Delta^{\rm P}_{\kappa,k}=\kappa T+X_{k}+1$, where $\kappa$ denotes the number of consecutive failed blocks preceding the $k$th block. }
\label{state3}
\end{figure}

\begin{figure*}[t]
\centering   \includegraphics[width=1\textwidth]{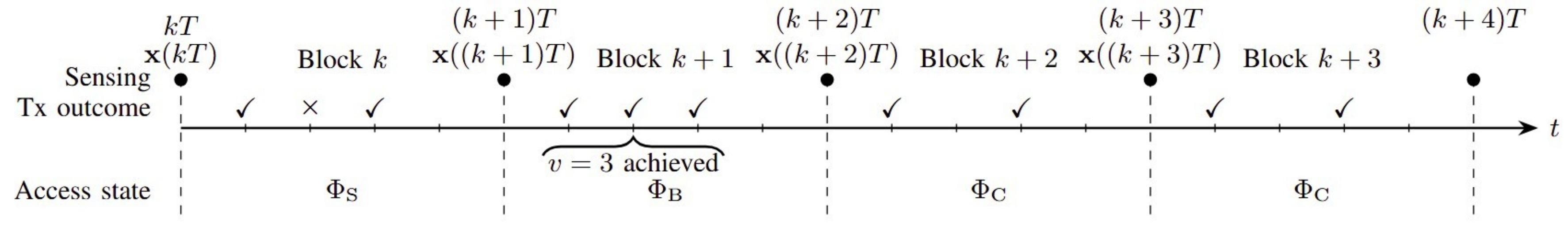}
\caption{Timing illustration for $T=5$ and $v=3$ showing block boundaries, state sensing, slot-level transmission outcomes,  formation of a burst of $v$ consecutive successful transmissions, and the evolution of access strategies from slot access ($\Phi_{\rm S}$) to block access ($\Phi_{\rm B}$), and finally to post-controllability slot access, where, {$\Phi_{\rm B},\Phi_{\rm S}\rightarrow\Phi_{\rm C}$}.}
\label{fig:twc_system_events}
\end{figure*}
Finally, to capture long-term control reliability, we introduce the peak control latency, which measures the number of blocks elapsed since the last controllable block and reflects the temporal consistency of maintaining controllability. To track controllability, we define the indicator variable $\tilde{O}_k=1$ if the typical controller achieves block controllability in block $k$.
\begin{definition}[Peak control latency]\label{def:pcontrolage}
For a controllable block $k$, the peak control latency $\Theta^{\rm pcl}_k \triangleq \tau\in\{1,\dots,k\}$ where $k-\tau$ denotes the most recent block prior to $k$ in which controllability was achieved\footnote{For consistency in our notations, we define that $\tilde{O}_0=1$.}. 
\end{definition}
We note that $\Theta^{\rm pcl}_k$ is defined only for blocks with $\tilde{O}_k=1$ and represents the gap between controllable blocks. 

Having defined the metrics, we next introduce controller scheduling. We consider a service horizon of $K$ blocks, each consisting of $T$ time slots. At the beginning of block $k$, the controller set $\Phi$ is partitioned into two disjoint subsets. The set $\Phi^{-}$ contains controllers that have not achieved block controllability up to the end of block $k-1$, while the set $\Phi_{\rm C} \triangleq \Phi \backslash \Phi^{-}$ consists of those controllers that have. By definition, before the initial block, no controller has yet achieved controllability, and hence $\Phi^{-} = \Phi$. 

To improve the controllability probability for the controllers in $\Phi^-$, we adopt a block Aloha protocol with access probability $\delta_{{\rm B}_k}$. This partitions $\Phi^{-}$ into $\Phi_{\rm B}$ with the controllers that gain block access and $\Phi_{\rm S}$ with the controllers that do not. Controllers in $\Phi_{\rm B}$ access the channel in all slots of block $k$, whereas the remaining controllers in $\Phi_{\rm S}$ and $\Phi_{\rm C}$ access the channel according to a slotted Aloha protocol with access probabilities 
$\delta_{\rm{S}_{k}}$ and $\delta_{\rm{C}_{k}}$, respectively. The access probabilities $\delta_{{\rm B}_k}, \delta_{{\rm S}_k}$ and $\delta_{{\rm C}_k}$ directly impact the slot transmission success probability, and thus controllability, latency, and \ac{PAoI}.  This design prioritizes controllers that have not yet achieved controllability. The probabilistic block access controls the density of transmitters that are active in all slots, reducing interference while enabling adaptive scheduling.

We optimize these access probabilities across blocks to balance the controllability, latency, and \ac{AoI} by integrating controllability probability, \ac{AoI} evolution, and peak control latency. Specifically, we analyze block controllability in Section~\ref{sec:alohasucess} and characterize the latency-related metrics in Section~\ref{sec:paoi}. Finally, these metrics are jointly optimized to determine the access probabilities in Section~\ref{sec:optimization}.
\subsection{Controllability Analysis}\label{sec:alohasucess}
To prioritize controllers that have not yet achieved controllability, we characterize and optimize the probability of achieving controllability. Specifically, we derive the block controllability probability, 
$P_{{\rm O}_{k}}
\triangleq
\mathbb{P}(O_k = 1),$ where the indicator variable $O_k=1$ denotes at least one block within $\{1,2,\ldots,k\}$ has been controllable, i.e., $\tilde{O}_i=1$ for some $1\leq i \le k$. 

Although the analysis is performed for a typical controller, by Slivnyak’s theorem, it is representative of all controllers in the PPP $\Phi$. Hence, the access probabilities $(\delta_{\rm{B}_k}, \delta_{\rm{S}_k}, \delta_{\rm{C}_k})$ define a network-wide randomized policy, under which each controller belongs to one of three access modes $\Phi_{\rm B}$, $\Phi_{\rm S}$, or $\Phi_{\rm C}$ with probabilities $(1-P_{{\rm O}_{k-1}})\delta_{\rm{B}_k}$, $(1-P_{{\rm O}_{k-1}})(1-\delta_{\rm{B}_k})$, and $P_{{\rm O}_{k-1}}$, respectively, implying that these probabilities correspond to the fractions of controllers in each set. The controllers not yet block-controllable form a fraction $1-P_{{\rm O}_{k-1}}$, while the remaining $P_{{\rm O}_{k-1}}$ are already in the post-controllability regime. Accordingly, the spatial densities $\lambda_{{\rm B}_k}$, $\lambda_{{\rm S}_k}$, and $\lambda_{{\rm C}_k}$ of potentially transmitting controllers in the sets $\Phi_{\rm B}$, $\Phi_{\rm S}$, and $\Phi_{\rm C}$ at block $k$ are respectively, 
\begin{align}
\lambda_{{\rm B}_k} &= (1-P_{{\rm O}_{k-1}})\,\delta_{{\rm B}_k}\lambda &
\lambda_{{\rm S}_k} &\!=\! (1-P_{{\rm O}_{k-1}})(1-\delta_{{\rm B}_k})\delta_{{\rm S}_k}\lambda \notag\\ \lambda_{{\rm C}_k} &= P_{{\rm O}_{k-1}}\,\delta_{{\rm C}_k}\lambda. \label{lambdas}
\end{align}
Here, we start with  $\lambda_{{\rm B}_1} =\delta_{{\rm B}_1}\lambda$, $\lambda_{{\rm S}_1} = (1-\delta_{{\rm B}_1})\delta_{{\rm S}_1}\lambda$, and $\lambda_{{\rm C}_1} = 0$, as no controller has achieved block controllability before block $k=1$. 
 Using these densities, the slot-success probability is derived as follows.
\begin{lemma}\label{lemma:1}
The conditional slot transmission success probability $\varrho_{k} =\mathbb{P}\!\left(\Upsilon(t)>\gamma \middle| S_t=1, \phi\right)$ of the typical controller in block $k$ is independent of the access mode $\phi$ and is given by
\begin{equation*}
    \begin{aligned}
\varrho_{k} \!=\exp\Big(-\frac{\gamma N_{o}}{\xi  r_{0}^{-\alpha}}\Big)   \exp \Big({\!-2 \pi \lambda_{\rm eff}\!\! \int_{0}^{\infty }\!\!\!\!\frac{\gamma r^{-\alpha}}{r_{0}^{-\alpha}\!+\! \gamma r^{-\alpha}}rdr}\! \Big), \!\!\!\!
    \end{aligned}
\end{equation*}
where the parameters $\xi,N_o$ and $\alpha$ are defined in~\eqref{eq:2}, and $\gamma$ is the SINR threshold for successful transmission, and $\lambda_{\rm eff}=\lambda_{\rm{B}_{k}}+\lambda_{\rm{S}_{k}}+\lambda_{\rm{C}_{k}}$ is the transmitting controller density from~\eqref{lambdas}. 
\end{lemma}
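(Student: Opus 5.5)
Conditioned on the typical controller transmitting in slot $t$ with unit-mean Rayleigh fading, $|h_0(t)|^2$ is exponential with mean one. The success event $\Upsilon(t)>\gamma$ can then be rewritten as $|h_0(t)|^2 > \gamma r_0^{\alpha}\xi^{-1}(N_o + I(t))$, where $I(t)=\sum_{j\in\Phi\setminus\{0\}}\mathbb{I}_j(S_t)\xi|h_j(t)|^2 r_j^{-\alpha}$ is the interference. Taking the expectation over $h_0$ first gives
\begin{equation*}
\varrho_k=\exp\Big(-\frac{\gamma N_o}{\xi r_0^{-\alpha}}\Big)\,\mathbb{E}\Big[\exp\Big(-\frac{\gamma}{\xi r_0^{-\alpha}} I(t)\Big)\Big].
\end{equation*}
So the noise factor separates out, and what remains is the Laplace transform of the interference at $s=\gamma r_0^{\alpha}/\xi$.

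Next I will identify the point process of active interferers in slot $t$. By Slivnyak's theorem, $\Phi\setminus\{0\}$ is again a PPP of density $\lambda$. Each controller independently belongs to $\Phi_{\rm B}$, $\Phi_{\rm S}$ or $\Phi_{\rm C}$ with the mode probabilities stated before \eqref{lambdas}, and then transmits in slot $t$ with probability $1$, $\delta_{{\rm S}_k}$ or $\delta_{{\rm C}_k}$ respectively. Independent thinning of a PPP gives three independent PPPs of densities $\lambda_{{\rm B}_k},\lambda_{{\rm S}_k},\lambda_{{\rm C}_k}$, and their superposition is a PPP of density $\lambda_{\rm eff}$. The mode $\phi$ of the typical controller only determines whether the conditioning event $S_t=1$ occurs; it does not change the interferer process, which is independent of the typical node. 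This gives the claimed independence of $\varrho_k$ from $\phi$.

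Applying the PGFL of the PPP and averaging over the i.i.d.\ exponential fades $|h_j|^2$, I obtain
\begin{equation*}
\mathbb{E}[e^{-sI}]=\exp\Big(-\lambda_{\rm eff}\int_{\mathbb{R}^2}\Big(1-\frac{1}{1+s\xi\|x\|^{-\alpha}}\Big)dx\Big).
\end{equation*}
Substituting $s=\gamma r_0^{\alpha}/\xi$ and converting to polar coordinates turns the integrand into $\frac{\gamma r^{-\alpha}}{r_0^{-\alpha}+\gamma r^{-\alpha}}$, with the $2\pi r\,dr$ factor, which matches the statement.

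The main obstacle is rigorously justifying the Poisson (independent-thinning) assumption. The mode memberships depend on each controller's past success history, and that history is correlated across nearby controllers through shared interference. Following the paper's modeling convention, I will treat memberships as independent marks with the mean-field fractions $P_{{\rm O}_{k-1}}$, i.e.\ the approximation implicit in \eqref{lambdas}. Under that convention the thinning step is exact and the rest of the derivation is routine.
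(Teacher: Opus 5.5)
Your proposal is correct and follows essentially the same route as the paper's proof: you extract the noise factor using $|h_0(t)|^2\sim\exp(1)$, identify the active interferers as a PPP of density $\lambda_{\rm eff}$ that does not depend on the typical controller's mode $\phi$, and evaluate the interference Laplace transform via the PGFL. You also point out explicitly that the thinning step rests on treating mode memberships as independent marks, which is the mean-field convention behind \eqref{lambdas}; the paper asserts the PPP property without comment.
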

\begin{proof}
See Appendix \ref{AppendixA}.  
\end{proof}
\begin{remark}
An exact analysis conditioned on a specific realization $  \Phi  $ of the PPP is analytically possible using higher-order moment analysis and meta distributions, as performed in \cite{ghatak2024channelaccessstrategiescontrolcommunication}. However, in practical uncoordinated wireless networked control systems, individual controllers do not have access to the exact locations of interferers. The only available information is typically the density $  \lambda  $. The mean-field approach therefore provides a robust success probability by averaging over all possible spatial geometries, favorable cases (distant interferers), unfavorable cases (nearby strong interferers), and all intermediate configurations, for the given density. The resulting deterministic $  \varrho_k  $ depends only on the effective density $  \lambda_{\rm eff}  $, enabling tractable optimization of the access probabilities $  \delta_{\rm B_k}  $, $  \delta_{\rm S_k}  $, and $  \delta_{\rm C_k}  $ that performs well across both good and poor interference geometries.
\end{remark}
The next lemma connects the slot-success and controllability probabilities.
\begin{lemma}\label{chiLemma}
In block $k$ of $T$ slots, the controllability probability $\mathbb{P}(\tilde{O}_k=1)=\chi(\mathbb{P}(\Upsilon(t)>\gamma))$ for any slot $t$ in block $k$, where  
\begin{align*}
\!\chi(x)\!=\!\!\!\sum_{l=1}^{\left \lfloor \!\frac{T+1}{v+1}\!\right \rfloor}\!\!\!(\!-\!1)^{l+1}\!\Bigg[ x\!+\!\frac{T\!-\!lv\!+\!1}{l} (1\!-\!x) \!\Bigg]\!\!\binom{T\!-\!lv\!}{l-1}x^{lv}(\!1\!-\!x)^{l-1}\!,
\end{align*}
with controllability index $v$ and  SINR threshold $\gamma$.
\end{lemma}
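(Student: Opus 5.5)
We will show that, within block $k$, the outcomes $G(kT),\dots,G((k+1)T-1)$ can be treated as i.i.d.\ Bernoulli random variables with success probability $x=\mathbb{P}(\Upsilon(t)>\gamma)$. Definition~\ref{Def:1} then reduces $\mathbb{P}(\tilde O_k=1)$ to the classical probability that a length-$T$ Bernoulli sequence contains a success run of length at least $v$. The i.i.d.\ reduction follows from the mean-field treatment in Lemma~\ref{lemma:1}. Fading is independent across slots, and the densities in \eqref{lambdas} are fixed throughout block $k$. Hence the per-slot success probability is the same in every slot of the block, and conditional on the access pattern the slot outcomes are independent. Any residual spatial correlation is ignored, exactly as in the averaging argument of the remark following Lemma~\ref{lemma:1}.

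Next we compute the run probability by inclusion--exclusion over run starts. Call slot $i$ a \emph{run start} if slots $i,\dots,i+v-1$ are all successes and either $i$ is the first slot or slot $i-1$ is a failure. A run of length at least $v$ exists if and only if at least one run start occurs. Let $A_i$ be the event that $i$ is a run start. Then
\[
\mathbb{P}(\tilde O_k=1)=\sum_{l\ge1}(-1)^{l+1}S_l,\qquad S_l=\sum_{i_1<\dots<i_l}\mathbb{P}(A_{i_1}\cap\dots\cap A_{i_l}).
\]
An intersection is nonempty only if consecutive run starts are separated by at least $v+1$ slots, because a failure must occur between runs. Each configuration therefore fixes $lv$ successes and one separating failure before each of the later $l-1$ runs, giving probability $x^{lv}(1-x)^{l-1}$. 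If the first run does not start at slot $0$, it needs an extra preceding failure, which contributes an additional factor $(1-x)$. This constraint also yields the upper limit $lv+(l-1)\le T$, i.e.\ $l\le\lfloor (T+1)/(v+1)\rfloor$.

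The main obstacle is the combinatorial count of admissible placements, split according to whether the first run starts at slot $0$. For this we would use a stars-and-bars argument. Each of the $l$ blocks ``$1^v$'' and $l-1$ blocks ``$01^v$'' is collapsed to a single token, and the tokens are interleaved among the $T-lv-(l-1)$ free slots. If the first run begins at slot $0$, there are $\binom{T-lv}{l-1}$ placements. If it does not, the count is $\binom{T-lv}{l}$, with weight $x^{lv}(1-x)^l$. The identity
\[
\binom{T-lv}{l}=\frac{T-lv-l+1}{l}\binom{T-lv}{l-1}
\]
merges the two cases into $\bigl[1+\frac{T-lv-l+1}{l}(1-x)\bigr]\binom{T-lv}{l-1}x^{lv}(1-x)^{l-1}$. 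This does not yet match $\chi$, whose bracket is $\bigl[x+\frac{T-lv+1}{l}(1-x)\bigr]$. Writing $1=x+(1-x)$ and absorbing the extra $(1-x)$ term into the second term, so that $\frac{T-lv-l+1}{l}+1=\frac{T-lv+1}{l}$, produces exactly the bracket in $\chi$. We would check this bookkeeping carefully, in particular the boundary case where a run ends at the last slot $T-1$, since that case is where an off-by-one error is most likely.
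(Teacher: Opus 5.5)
Your proof is correct, and it is a more rigorous version of the paper's argument rather than a different one. Both proofs use inclusion--exclusion with the same upper limit $l\le\lfloor (T+1)/(v+1)\rfloor$, the same weight $x^{lv}(1-x)^{l-1}$ and the same count $\binom{T-lv}{l-1}$. Both also take the i.i.d.\ Bernoulli reduction from the mean-field model.

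The differences are in the choice of events and in where the bracket comes from:
\begin{itemize}
\item The paper applies inclusion--exclusion to $\mathcal{A}_{i'}$, a run of $v$ ones beginning at $i'$, with no condition on slot $i'-1$. For these events, overlapping or abutting index sets have nonzero intersections, for example $\mathbb{P}(\mathcal{A}_{1}\cap\mathcal{A}_{2})=x^{v+1}$ and $\mathbb{P}(\mathcal{A}_{1}\cap\mathcal{A}_{v+1})=x^{2v}$. So the claimed product $x^{lv}(1-x)^{l-1}$ does not follow. The paper then attributes the factor $x+\frac{T-lv+1}{l}(1-x)$ to de Moivre's run problem instead of deriving it.
\item Your run-start events (slot $0$, or preceded by a failure) are what make incompatible intersections empty and force the separating failures. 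Splitting on whether the first run starts at slot $0$, and using $\binom{T-lv}{l}=\frac{T-lv-l+1}{l}\binom{T-lv}{l-1}$, gives the bracket explicitly. Your version therefore actually justifies the formula the paper states.
\end{itemize}

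The boundary case you flag is harmless. A run ending at slot $T-1$ needs no trailing failure, and placements with no free slot after the last token are already included in your stars-and-bars count.

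One small wording point concerns the slot-access regime. There, what makes the outcomes i.i.d.\ with mean $\delta_{{\rm S}_k}\varrho_k$ is the independence of the per-slot access draws. Conditioning on the access pattern would not work, because it would force $G(t)=0$ deterministically in non-access slots.
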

\begin{proof}
See Appendix \ref{Appendix:Chi}.  
\end{proof}
Combining the above two lemmas, we can compute the block controllability probability  $P_{{\rm O}_{k}}$ recursively as follows.
\begin{theorem}\label{lem:Chi}
Consider a control network following a PPP with density $\lambda$ and access probabilities $\delta_{{\rm B}_k}, \delta_{{\rm S}_k}$ and $\delta_{{\rm C}_k}$. The controllability state probability up to block $k$ is 
\begin{equation} \label{eq:state}
P_{{\rm O}_{k}}
=
P_{{\rm O}_{k-1}}
+
(1-P_{{\rm O}_{k-1}})\pi_k,
\quad k > 1, \text{ with } P_{\rm{O}_1}=\pi_1,
\end{equation}
where $\pi_k$ is the first-time controllability probability at $k$,
\begin{equation}\label{eq:blockprobability}
\pi_{k}\!=\!\mathbb{P}(\tilde O_k \!=\! 1 \!\mid\! O_{k-1}\!=\!0)\!=\!
\delta_{{\rm B}_k}\chi(\varrho_{k})\!+\!
(1\!-\!\delta_{{\rm B}_k}) \chi(\delta_{{\rm S}_k} \varrho_{k}),
\end{equation}
where the function $\chi(\cdot)$ is defined in Lemma~\ref{chiLemma}.
\end{theorem}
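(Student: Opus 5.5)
The plan is to decompose the event $\{O_k=1\}$ by whether controllability was already achieved before block $k$, and then to evaluate the first-time probability $\pi_k$ by conditioning on the access mode of the typical controller in block $k$.

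\textbf{Step 1: the recursion.} Since $O_k=1$ means $\tilde O_i=1$ for some $i\le k$, we have
\[
\{O_k=1\}=\{O_{k-1}=1\}\cup\bigl(\{O_{k-1}=0\}\cap\{\tilde O_k=1\}\bigr),
\]
and the union is disjoint. Taking probabilities gives
\[
P_{{\rm O}_k}=P_{{\rm O}_{k-1}}+(1-P_{{\rm O}_{k-1}})\,\mathbb{P}(\tilde O_k=1\mid O_{k-1}=0).
\]
For $k=1$, no controller is controllable beforehand, so $P_{{\rm O}_1}=\mathbb{P}(\tilde O_1=1)=\pi_1$.

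\textbf{Step 2: conditioning on the access mode.} Given $O_{k-1}=0$, the typical controller lies in $\Phi^-$. Under the block Aloha draw, it enters $\Phi_{\rm B}$ with probability $\delta_{{\rm B}_k}$ and $\Phi_{\rm S}$ otherwise, and this draw is independent of the past. The law of total probability then splits $\pi_k$ into these two cases.

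\textbf{Step 3: the per-slot success probability in each mode.} The per-slot success probability in block $k$ is $\mathbb{P}(\Upsilon(t)>\gamma)$, which equals the access probability in that slot times $\varrho_k$.
\begin{itemize}
\item In $\Phi_{\rm B}$ the controller transmits in every slot, so the per-slot success probability is $\varrho_k$.
\item In $\Phi_{\rm S}$ it transmits independently in each slot with probability $\delta_{{\rm S}_k}$, so the per-slot success probability is $\delta_{{\rm S}_k}\varrho_k$.
\end{itemize}
Lemma~\ref{lemma:1} guarantees that $\varrho_k$ is the same in both modes and depends only on $\lambda_{\rm eff}$ from \eqref{lambdas}. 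Applying Lemma~\ref{chiLemma} in each mode gives $\chi(\varrho_k)$ and $\chi(\delta_{{\rm S}_k}\varrho_k)$, and weighting by $\delta_{{\rm B}_k}$ and $1-\delta_{{\rm B}_k}$ yields \eqref{eq:blockprobability}.

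\textbf{Main obstacle: independence across slots.} Lemma~\ref{chiLemma} treats the slot outcomes $G(t)$ within a block as i.i.d. Bernoulli variables. For block access I must justify this, since the controller's own access indicator is constant over the block. I will argue it via the mean-field model already adopted in Lemma~\ref{lemma:1}:
\begin{itemize}
\item fading is independent across slots;
\item the interferer configuration is averaged through $\lambda_{\rm eff}$ rather than conditioned on;
\item slot access draws are independent across slots.
\end{itemize}
Under these conditions the success events are independent with a common probability. I will state explicitly that this is the same approximation underlying the deterministic $\varrho_k$ in the preceding remark, not an exact statement conditional on $\Phi$. I also need to check that the densities \eqref{lambdas} use $P_{{\rm O}_{k-1}}$ as the fraction of controllers in each set, which is consistent with the recursion being evaluated block by block.
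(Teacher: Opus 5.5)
Your proposal is correct and follows the paper's proof: the same disjoint decomposition of $\{O_k=1\}$ gives the recursion, and total probability over $\Phi_{\rm B}$ versus $\Phi_{\rm S}$, with Lemmas~\ref{lemma:1} and~\ref{chiLemma} supplying $\chi(\varrho_k)$ and $\chi(\delta_{{\rm S}_k}\varrho_k)$, gives $\pi_k$. Your explicit statement that within-block i.i.d.\ slot outcomes rest on the density-averaged (mean-field) model is something the paper leaves implicit. One correction: the correlation this guards against comes from the interferer geometry shared across slots, which affects both access modes. It does not come from the typical controller's constant access indicator under block access, which is deterministic once you condition on the mode.
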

\begin{proof}
The result follows because $O_k=1$ if $O_{k-1}=1$, or if controllability is achieved for the first time in block $k$. Since these two events are mutually exclusive, we derive \eqref{eq:state}. 

Further, to derive \eqref{eq:blockprobability}, we note that conditioned on $O_{k-1}=0$, the controller belongs to $\phi=\Phi_{\rm B}$
with probability $\delta_{{\rm B}_k}$ or $\phi=\Phi_{\rm S}$ with probability $1-\delta_{{\rm B}_k}$. 
Hence, 
\begin{multline*}
    \pi_{k}=  \delta_{{\rm B}_k}\mathbb{P}(\tilde{O}_k=1|O_{k-1}=0,\phi=\Phi_{\rm B})\\+
(1-\delta_{{\rm B}_k}) \mathbb{P}(\tilde{O}_k=1|O_{k-1}=0,\phi=\Phi_{\rm S}).
\end{multline*}
Now, invoking Lemmas~\ref{lemma:1} and ~\ref{chiLemma}, we arrive at \eqref{eq:blockprobability}, completing the proof.
\end{proof}
\subsection{Latency Analysis}\label{sec:paoi}
We next characterize three latency-related metrics: the peak latency and \ac{PAoI} of the first input in a block and peak control latency, as defined in Definitions~\ref{def:pl}, \ref{def:paoi}, and \ref{def:pcontrolage}.

We recall that at the beginning of every block \(k\), we select the optimal global access probabilities \(\delta_{\rm{B}_k}\), \(\delta_{\rm{S}_k}\), and \(\delta_{\rm{C}_k}\) based on the fraction of controllers that have achieved controllability up to block $k-1$. Also, we do not track the specific regime ($\Phi_{\rm B}$, $\Phi_{\rm S}$, or $\Phi_{\rm C}$) of individual controllers; hence, past information is limited to whether each block was successful or not, i.e., \(Z(i)=1\) for $i<k$, where \(Z(i)\) is the indicator variable denoting that block \(i\) contains at least one successful transmission. 

For a given per-block success probability $p_i$ for the $i$th block, the slot success indicators within block $i$ are i.i.d.\ Bernoulli random variables due to independent fading. However, the slot success probability \(p_i\) in block \(i\) is itself a random variable, because it depends on the unknown regime of the typical controller in that block and evolves as the access strategy changes over time, as illustrated in Fig.~\ref{fig:twc_system_events}. Based on this observation, we derive the expected peak latency for the first input in block $k$ with $Z(k)=1$, using the past slot success probabilities.

\begin{theorem}\label{Theo:1}
For a given block $k$ such that $Z(k)=1$, given the history of success probabilities \(\{p_1,\dots,p_k\}\),  the expected peak latency for the first input is given by
\begin{align}
\Theta^{\rm pl}_{k}
&=
\sum_{\kappa=1}^{k}
(1-q_{k-\kappa}^T)
\left[
\prod_{i=k-\kappa+1}^{k-1}q_i^T
\right]
\left[
\frac{q_{k-\kappa}}{p_{k-\kappa}}
+
T\kappa
\right]
\notag\\
&\quad
-
T
\!\sum_{\kappa=1}^{k}\!
\left[
\prod_{i=k-\kappa}^{k-1}\!\!q_i^T
\right]
\!-\!T\!
+\!
\bigg( \frac{q_k}{p_k}
\!-\!
\frac{Tq_k^T}{1-q_k^T} \!\bigg)
\!+\!1,
\label{Theorem1}
\end{align}
where we define $q_i=1-p_i$.
\end{theorem}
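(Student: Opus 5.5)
The plan is to split the latency of Definition~\ref{def:pl} into independent pieces and evaluate each expectation separately. By linearity,
\[
\mathbb{E}[\mathcal{L}^{\rm P}_{\kappa,k}]=T\,\mathbb{E}[\kappa-1]+\mathbb{E}[W_{k-\kappa}]+\mathbb{E}[X_k\mid Z(k)=1]+1 .
\]
We work conditionally on the history $\{p_1,\dots,p_k\}$. Given this history, slot outcomes are independent Bernoulli variables, with parameter $p_i$ in block $i$. So block $i$ fails entirely with probability $q_i^T$, independently across blocks.

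\textbf{Step 1: the current-block term.} Conditioned on $Z(k)=1$, $X_k$ is a geometric variable truncated to $\{0,\dots,T-1\}$, with $\mathbb{P}(X_k=x\mid Z(k)=1)=p_kq_k^x/(1-q_k^T)$. Summing $x\,p_kq_k^x$ for $x=0,\dots,T-1$ by the standard derivative-of-geometric-series identity should give
\[
\mathbb{E}[X_k\mid Z(k)=1]=\frac{q_k}{p_k}-\frac{Tq_k^T}{1-q_k^T}.
\]
This is exactly the bracketed term in \eqref{Theorem1}.

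\textbf{Step 2: the law of $\kappa$.} We have $\kappa=\kappa_0$ iff block $k-\kappa_0$ has at least one success and blocks $k-\kappa_0+1,\dots,k-1$ are all failures. Hence
\[
\mathbb{P}(\kappa=\kappa_0)=(1-q_{k-\kappa_0}^T)\prod_{i=k-\kappa_0+1}^{k-1}q_i^T .
\]
We adopt the convention that block $0$ is successful, taking $q_0=0$ in line with the convention $\tilde O_0=1$. With this convention the probabilities over $\kappa_0=1,\dots,k$ telescope to $1$. The term $T\,\mathbb{E}[\kappa-1]$ then becomes $\sum_\kappa \mathbb{P}(\kappa)\,T\kappa-T$, which produces the $T\kappa$ contribution and the isolated $-T$ in \eqref{Theorem1}.

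\textbf{Step 3: the trailing failures $W_{k-\kappa}$.} This is the step I expect to need the most care. Conditioned on $\kappa$, block $k-\kappa$ is only known to contain at least one success, and its slots are i.i.d. Reversing time within that block, $W_{k-\kappa}$ is the number of failures before the first success in the reversed sequence. It therefore has the same truncated-geometric law as in Step 1, but with parameter $p_{k-\kappa}$. This gives
\[
\mathbb{E}[W_{k-\kappa}\mid\kappa]=\frac{q_{k-\kappa}}{p_{k-\kappa}}-\frac{Tq_{k-\kappa}^T}{1-q_{k-\kappa}^T}.
\]
The independence across blocks guarantees that conditioning on the later blocks failing does not affect this law. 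Multiplying by $\mathbb{P}(\kappa)$, the factor $(1-q_{k-\kappa}^T)$ cancels in the second piece, leaving $-T\prod_{i=k-\kappa}^{k-1}q_i^T$. Summing over $\kappa$ yields the middle sum in \eqref{Theorem1}. The first piece combines with Step 2 into $\sum_\kappa\mathbb{P}(\kappa)\big[q_{k-\kappa}/p_{k-\kappa}+T\kappa\big]$.

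For the boundary case $\kappa=k$ (block $0$), every contribution vanishes: $q_0=0$ kills both $q_0/p_0$ and the product. Assembling Steps 1--3 and adding the final $+1$ should reproduce \eqref{Theorem1}. The remaining checks are this boundary convention and that the telescoping in Step 2 is consistent with it.
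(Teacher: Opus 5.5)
Your proposal is correct and follows the paper's proof essentially step for step. It uses the same decomposition $W_{k-\kappa}+T(\kappa-1)+X_k+1$, the same total expectation over $\kappa$ with $\mathbb{P}(\kappa\mid Z(k)=1)=(1-q_{k-\kappa}^T)\prod_{i=k-\kappa+1}^{k-1}q_i^T$, and the same truncated-geometric means for $X_k$ and $W_{k-\kappa}$. Making $q_0=0$ explicit is a welcome addition, since the isolated $-T$ silently needs these probabilities to sum to one, which the paper only gets implicitly from $\tilde O_0=1$. One wording fix: at $\kappa=k$ only the $W_0$ terms vanish, and the $Tk\prod_{i=1}^{k-1}q_i^T$ contribution survives.
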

\begin{proof}
See Appendix \ref{AppendixB}.
\end{proof}

In a similar spirit, we derive the expected \ac{PAoI} of the first control input in block $k$ with $Z(k)=1$, using the past slot success probabilities. Unlike the peak latency metric, \ac{PAoI} focuses on information freshness and depends only on the elapsed time since the previous successful update.
\begin{theorem}[PAoI]
\label{Theo:2}
For a given block $k$ such that $Z(k)=1$, given the history of success probabilities \(\{p_1,\dots,p_k\}\),  the expected \ac{PAoI} of the first control input is given by 
\begin{align}
\Theta^{\rm pa}_{k}
\!=\!
T\!
\sum_{\kappa=1}^{k}\!
\kappa
(1\!-\!q_{k-\kappa}^T) \!\!\!
\prod_{i=k-\kappa+1}^{k-1}\!\!\!\!q_i^T
\!+\!
\bigg( \frac{q_k}{p_k}
\!-\!
\frac{Tq_k^T}{1-q_k^T} \!\bigg)
\!+ \!
1. \label{th2}
\end{align}
\end{theorem}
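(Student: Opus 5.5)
We will decompose the expected PAoI of Definition~\ref{def:paoi} by linearity of expectation, $\mathbb{E}[\Delta^{\rm P}_{\kappa,k}]=T\,\mathbb{E}[\kappa]+\mathbb{E}[X_k+1]$, and treat the two contributions separately. This mirrors the structure of the proof of Theorem~\ref{Theo:1}. The only difference is that the PAoI does not involve the tail term $W_{k-\kappa}$ of the previous successful block, since the age is measured from the generation time $(k-\kappa)T$ of that block's input.

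\textbf{Step 1: the distribution of $\kappa$.} Conditioned on the success probabilities $\{p_1,\dots,p_k\}$, slot outcomes are independent Bernoulli variables, so block $i$ contains no success with probability $q_i^T$. The event $\{\kappa=\kappa_0\}$, looking backward from block $k$, means the following: blocks $k-\kappa_0+1,\dots,k-1$ are all failed blocks, and block $k-\kappa_0$ is successful. Independence across blocks gives
\[
\mathbb{P}(\kappa=\kappa_0)=(1-q_{k-\kappa_0}^T)\prod_{i=k-\kappa_0+1}^{k-1}q_i^T .
\]
Using the convention $\tilde O_0=1$, we treat block $0$ as successful (take $q_0=0$), which closes the backward recursion at $\kappa=k$. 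We then obtain
\[
T\,\mathbb{E}[\kappa]=T\sum_{\kappa=1}^{k}\kappa(1-q_{k-\kappa}^T)\prod_{i=k-\kappa+1}^{k-1}q_i^T ,
\]
which is the first term of \eqref{th2}. We will check that these probabilities sum to one by telescoping: $\sum_{\kappa}(1-q_{k-\kappa}^T)\prod q_i^T = 1-\prod_{i=0}^{k-1}q_i^T = 1$.

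\textbf{Step 2: the first-success position in block $k$.} Given $Z(k)=1$, the count $X_k$ is a geometric variable truncated to $\{0,\dots,T-1\}$:
\[
\mathbb{P}(X_k=x\mid Z(k)=1)=\frac{q_k^x p_k}{1-q_k^T}.
\]
Evaluating $\sum_{x=0}^{T-1}x q_k^x p_k/(1-q_k^T)$ with the standard arithmetico-geometric sum gives
\[
\mathbb{E}[X_k\mid Z(k)=1]=\frac{q_k}{p_k}-\frac{Tq_k^T}{1-q_k^T}.
\]
Adding $1$ produces the last two terms of \eqref{th2}. This is the same computation as in Theorem~\ref{Theo:1}, so it can be reused directly.

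\textbf{Step 3: combining and the main obstacle.} The within-block term $X_k$ depends only on block $k$'s slots, while $\kappa$ depends only on earlier blocks. Conditioning on $Z(k)=1$ therefore does not alter the law of $\kappa$, and the two expectations can simply be added. The main subtlety is the boundary case $\kappa=k$: there is no earlier successful block, and we must justify treating block $0$ as the reference generation time so that the distribution of $\kappa$ is proper. We will handle this through the convention $q_0=0$ stated in Step~1.
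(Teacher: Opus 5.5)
Your proposal is correct and takes essentially the same route as the paper. Both split $\Theta^{\rm pa}_k$ into $T\,\mathbb{E}[\kappa\mid Z(k)=1]+\mathbb{E}[X_k\mid Z(k)=1]+1$, and both use the backward law $(1-q_{k-\kappa}^T)\prod_{i=k-\kappa+1}^{k-1}q_i^T$ for $\kappa$ (taken from the proof of Theorem~\ref{Theo:1}) together with the truncated-geometric mean for $X_k$; your convention $q_0=0$ and the telescoping normalization check only make explicit what the paper encodes through $\tilde{O}_0=1$.
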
 
\begin{proof}
See Appendix \ref{AppendixC}.
\end{proof}

Finally, the next result characterizes the peak control latency of a controllable block.   
\begin{theorem}\label{thm:controllatency}
Consider a control network following a PPP with density $\lambda$ and access probabilities $\delta_{{\rm B}_k}, \delta_{{\rm S}_k}$ and $\delta_{{\rm C}_k}$. Using $\varrho_k$ and $\chi(\cdot)$  in Lemmas~\ref{lemma:1} and~\ref{chiLemma}, the peak control latency satisfies
\begin{equation} 
\mathbb{P}(\Theta^{\rm pcl}_k\!=\!\tau| \tilde{O}_k=1)
\!=\!
\frac{
P_{\tilde{\rm{O}}_{k-\tau}}
\prod_{i=k-\tau+1}^{k-1}\big(1-\chi(\delta_{{\rm C}_i}\varrho_i)\big)
}{
\sum_{\kappa=1}^{k}
\!\!P_{\tilde{\rm{O}}_{k-\kappa}}
\!\!\prod_{i=k-\kappa+1}^{k-1}\!\!\big(1\!-\!\chi(\delta_{{\rm C}_i}\varrho_i)\big)
},
\label{eq:CA_pmf_exact}
\end{equation}
for $\tau=1,2,\ldots,k$. Here, instantaneous controllability probability $P_{\tilde{\rm{O}}_k}$ is defined using $P_{{\rm O}_{k-1}}$ and $\pi_k$ in Theorem~\ref{lem:Chi}~as
\begin{equation} \label{eq:inst}
P_{\tilde{\rm{O}}_k}=\mathbb{P}(\tilde{O}_k=1)
=
(1-P_{{\rm O}_{k-1}})\pi_k
+
P_{{\rm O}_{k-1}}\chi(\delta_{{\rm C}_k}\varrho_k).
\end{equation}
\end{theorem}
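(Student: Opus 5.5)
The plan is to prove \eqref{eq:inst} first and then the pmf \eqref{eq:CA_pmf_exact} by a Bayes-type decomposition over the position of the last controllable block before $k$.

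\textbf{Step 1: the instantaneous probability \eqref{eq:inst}.} Condition on $O_{k-1}$, which is $0$ with probability $1-P_{{\rm O}_{k-1}}$ and $1$ with probability $P_{{\rm O}_{k-1}}$, both available from Theorem~\ref{lem:Chi}.
\begin{itemize}
\item On $\{O_{k-1}=0\}$, the probability of $\tilde O_k=1$ is exactly $\pi_k$ from \eqref{eq:blockprobability}.
\item On $\{O_{k-1}=1\}$, the controller lies in $\Phi_{\rm C}$ and uses slotted Aloha with probability $\delta_{{\rm C}_k}$. By Lemma~\ref{lemma:1}, its conditional slot success probability $\varrho_k$ does not depend on the mode, so its per-slot success probability is $\delta_{{\rm C}_k}\varrho_k$.
\item Under the mean-field model, slots are i.i.d.\ given the block. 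Lemma~\ref{chiLemma} then gives $\chi(\delta_{{\rm C}_k}\varrho_k)$.
\end{itemize}
The law of total probability yields \eqref{eq:inst}.

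\textbf{Step 2: the pmf \eqref{eq:CA_pmf_exact}.} For $\tau\in\{1,\dots,k\}$, the event $\{\Theta^{\rm pcl}_k=\tau,\ \tilde O_k=1\}$ equals
\[
\{\tilde O_{k-\tau}=1\}\cap\{\tilde O_i=0,\ k-\tau<i<k\}\cap\{\tilde O_k=1\}.
\]
These events are disjoint over $\tau$, and their union is $\{\tilde O_k=1\}$, using the convention $\tilde O_0=1$. The conditional pmf is therefore the joint probability for $\tau$ normalised by the sum over $\kappa=1,\dots,k$. The factor $\mathbb{P}(\tilde O_k=1\mid\cdot)$ is common to all terms and cancels, once we note that after block $k-\tau\ge1$ the controller is in $\Phi_{\rm C}$, so this factor equals $\chi(\delta_{{\rm C}_k}\varrho_k)$ for every $\tau$.

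To evaluate the joint probability:
\begin{itemize}
\item Write it as $P_{\tilde{\rm O}_{k-\tau}}$ times the conditional probability that blocks $k-\tau+1,\dots,k-1$ are all non-controllable.
\item Given $\tilde O_{k-\tau}=1$, we have $O_i=1$ for all $i\ge k-\tau$, so the controller uses the post-controllability slot Aloha in every later block.
\item Fading and access decisions are independent across blocks, and $\varrho_i$ is deterministic (the mean-field approach of Lemma~\ref{lemma:1}). Hence block outcomes are conditionally independent, each failing with probability $1-\chi(\delta_{{\rm C}_i}\varrho_i)$ by Lemma~\ref{chiLemma}.
\item This gives the product $\prod_{i=k-\tau+1}^{k-1}\big(1-\chi(\delta_{{\rm C}_i}\varrho_i)\big)$.
\end{itemize}
Substituting into the normalised ratio gives \eqref{eq:CA_pmf_exact}.

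\textbf{Main obstacles.} The first is the case $\tau=k$, i.e.\ $k-\tau=0$. Here the intervening blocks $1,\dots,k-1$ were not in $\Phi_{\rm C}$, since no controllability had yet occurred, so the factor is strictly $\prod_{i<k}(1-\pi_i)$ rather than the post-controllability product. The convention $\tilde O_0=1$ together with $P_{\tilde{\rm O}_0}$ must be interpreted so that the formula absorbs this boundary term. I would handle it by defining that boundary term consistently with the stated expression and noting that the last-block factor $\mathbb{P}(\tilde O_k=1\mid\cdot)$ differs there ($\pi_k$ rather than $\chi(\delta_{{\rm C}_k}\varrho_k)$). Interpreting this as the intended approximation, or as part of the convention, is the delicate point of the argument.

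The second point is justifying conditional independence across blocks given $\tilde O_{k-\tau}=1$. I would rely on the deterministic $\varrho_i$ and on independent per-block access and fading draws.
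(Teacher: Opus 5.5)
Your plan follows the paper's proof. The paper also writes the conditional pmf as the joint probability over $P_{\tilde{\rm O}_k}$ and factors the joint event by the chain rule, using that every block after a controllable one is in the post-controllability regime. The final factor $\chi(\delta_{{\rm C}_k}\varrho_k)$ then cancels against the denominator, which is expanded over the last controllable block $k-\kappa$. Finally, \eqref{eq:inst} comes from conditioning on $O_{k-1}$.

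Your concern about $\tau=k$ is justified, and the paper's own proof has the same gap. It asserts that if $\tilde O_{k-\tau}=1$ then all later blocks use $\delta_{{\rm C}_i}$, and it applies this to the virtual block $0$ as well. But before block $1$ we have $\Phi^{-}=\Phi$ and $\lambda_{{\rm C}_1}=0$, so blocks $1,\dots,k$ are not in the post-controllability regime.

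The exact joint term for $\tau=k$ is $\mathbb{P}(O_{k-1}=0,\tilde O_k=1)=(1-P_{{\rm O}_{k-1}})\pi_k$. The exact pmf is therefore
\begin{itemize}
\item $P_{\tilde{\rm O}_{k-\tau}}\prod_{i=k-\tau+1}^{k-1}\bigl(1-\chi(\delta_{{\rm C}_i}\varrho_i)\bigr)\chi(\delta_{{\rm C}_k}\varrho_k)/P_{\tilde{\rm O}_k}$ for $\tau<k$;
\item $(1-P_{{\rm O}_{k-1}})\pi_k/P_{\tilde{\rm O}_k}$ for $\tau=k$.
\end{itemize}
These sum to one because $\sum_{\kappa=1}^{k-1}P_{\tilde{\rm O}_{k-\kappa}}\prod_{i=k-\kappa+1}^{k-1}\bigl(1-\chi(\delta_{{\rm C}_i}\varrho_i)\bigr)=P_{{\rm O}_{k-1}}$, combined with \eqref{eq:inst}.

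The $\kappa=k$ summand also appears in the normalizer of \eqref{eq:CA_pmf_exact}, so the error there shifts the pmf at every $\tau$ when $k\ge 2$. The displayed ratio agrees with the exact pmf only if that summand equals $(1-P_{{\rm O}_{k-1}})\pi_k/\chi(\delta_{{\rm C}_k}\varrho_k)$. In general this is not of the form $P_{\tilde{\rm O}_0}\prod_{i=1}^{k-1}\bigl(1-\chi(\delta_{{\rm C}_i}\varrho_i)\bigr)$ for any $k$-independent $P_{\tilde{\rm O}_0}$. So do not \emph{absorb} it into the convention $\tilde O_0=1$. Either state the corrected boundary term, or present \eqref{eq:CA_pmf_exact} explicitly as an approximation.
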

\begin{proof}
See Appendix~\ref{AppendixD}.
 \end{proof}
 Using Theorem~\ref{thm:controllatency}, the expected peak control latency at a controllable block $k$ is 
\begin{align}
\mathbb{E}[\Theta^{\rm pcl}_k \!\mid \!\tilde{O}_k\!=\!1]
\!=\!
\frac{
\sum_{\tau=1}^{k}
\tau
P_{\tilde{\rm{O}}_{k-\tau}}
\!\prod_{i=k-\tau+1}^{k-1}(1\!-\!\chi(\delta_{{\rm C}_i}\varrho_i))
}{
\sum_{\kappa=1}^{k}
P_{\tilde{\rm{O}}_{k-\kappa}}
\prod_{i=k-\kappa+1}^{k-1}(1-\chi(\delta_{{\rm C}_i}\varrho_i))
}.
\label{eq:E_CA_exact}
\end{align}

Building on the controllability and latency analysis, we next formulate the optimization framework for access probabilities.

\subsection{Access Probability Optimization}\label{sec:optimization}
We optimize access probabilities  \(\delta_{\rm B_k}\), \(\delta_{\rm S_k}\), and \(\delta_{\rm C_k}\) by balancing two key objectives: minimizing the latency metrics while maximizing the block controllability probability. However, from the peak latency and \ac{PAoI} analysis, we note that the first two terms in~\eqref{Theorem1} and the first term in \eqref{th2} depend only on the history \(\{p_1,\dots,p_{k-1}\}\) and are independent of the current access probability. The only term relevant to the optimization is the $\left(
\frac{q_{k}}{p_{k}}
- \frac{T q_{k}^T}{1-q_{k}^T}
\right)$. So, we minimize the expectation of this term over the regime uncertainty,
\begin{equation}\label{eq:Thetacurr}
\Theta^{\rm curr}_k=\!\!\!\!\!\!\!\!\!\!\sum_{\phi\in\{\Phi_{\rm B},\Phi_{\rm S},\Phi_{\rm C}\}}
\!\!\!\!\!\!\!\!\!\mathbb{P}(\phi \mid Z(k)=1)
\left(
\frac{q_{k,\phi}}{p_{k,\phi}}
- \frac{T q_{k,\phi}^T}{1-q_{k,\phi}^T}
\right).
\end{equation}
Here, $q_{k,\phi}=1-p_{k,\phi}$ and the regime-specific slot success probabilities are 
\begin{equation}\label{eq:p_kphi}
    p_{k,\phi}=\begin{cases}
        \varrho_k &\text{for } \phi = \Phi_{\rm B}\\
        \delta_{\rm S_k}\varrho_k &\text{for } \phi = \Phi_{\rm S}\\
        \delta_{\rm C_k}\varrho_k &\text{for } \phi = \Phi_{\rm C}.
    \end{cases}
\end{equation}
Also, the regime probabilities conditioned on block success~are
\begin{align}
 \mathbb{P}(\phi\mid Z(k)=1)&= \frac{\mathbb{P}(Z(k)=1\mid\phi)P_{k,\phi}}{\sum_{\phi'\in\{\Phi_{\rm B},\Phi_{\rm S},\Phi_{\rm C}\}}\mathbb{P}(Z(k)=1\mid\phi')P_{k,\phi'}} \notag\\
 &= \frac{
P_{k,\phi}\bigl[1-q_{k,\phi}^T\bigr]
}{
\!\sum_{\phi' \in \{\Phi_{\rm{B}}, \Phi_{\rm{S}}, \Phi_{\rm{C}}\}}\!\!
P_{k,\phi'}\!\bigl[1\!-\!q_{k,\phi'}^T\bigr]
},\label{rp}
\end{align}
where the regime fractions are 
\begin{equation}
    P_{k,\phi} = \begin{cases}
        (1-P_{{\rm O}_{k-1}})\delta_{\rm{B}_k} &\text{for } \phi = \Phi_{\rm B}\\
        (1-P_{{\rm O}_{k-1}})(1-\delta_{\rm{B}_k})\delta_{\rm{S}_k} &\text{for } \phi = \Phi_{\rm S}\\
        P_{{\rm O}_{k-1}}\delta_{\rm{C}_k}&\text{for } \phi = \Phi_{\rm C},
    \end{cases}\label{rff}
\end{equation}
since the probability that the typical controller belongs to set $\phi$ is equal to the fraction of controllers in that same set $\phi$. The formulation using $\Theta^{\rm curr}_k$ isolates the impact of current block decisions, avoids incorporating terms that depend on the past, and eliminates the need to account for all possible system trajectories.

Furthermore, to make both the latency terms and controllability comparable on a probabilistic scale, we express the latency objective via its cumulative distribution function. We define the CDF of \(\Theta^{\rm curr}_k\) as
\begin{align}\label{eq:jointtheta_curr}
P_{\Theta^{\rm curr}_k}(\eta)
&=
\mathbb{P}(\Theta^{\rm curr}_k \le \eta, Z(k)=1) \notag
\\&=
\mathbb{P}(\Theta^{\rm curr}_k \le \eta \mid Z(k)=1)\, \mathbb{P}(Z(k)=1).
\end{align}

Similarly, we characterize peak control latency by its conditional CDF conditioned on block $k$ being controllable. The metric is defined as
$\mathbb{P}(\Theta^{\rm pcl}_{k}\le\eta \mid \tilde{O}_k=1),
\quad \eta \ge 0$ and its corresponding joint form $\mathbb{P}(\Theta^{\rm pcl}_k \le \eta, \tilde{O}_k=1)$. Using the definition of conditional probability, we obtain
\begin{align}   
P_{\Theta^{\rm pcl}_{k}}(\eta)
=\! \mathbb{P}(\Theta^{\rm pcl}_k \!\le\! \eta, \tilde{O}_k\!=\!1) \!= \mathbb{P}(\Theta^{\rm pcl}_k \le \eta \mid \tilde{O}_k=1)\, P_{\tilde{\rm{O}}_k}.
\label{control_age}
\end{align}
Using the CDF-based age characterization, we formulate cost functions that jointly balance block controllability, latency,  input age, and peak control latency to optimize access.

We define a policy in block $k$ as the triplet
$(\delta_{\rm{B}_k}, \delta_{\rm{S}_k}, \delta_{\rm{C}_k})$ that balances controllability, latency and age related performance. A candidate policy is any feasible triplet in the set $[0,1]^3$ considered during the numerical search. We jointly maximize the block controllability probability $P_{{\rm O}_k}$, the likelihood that the current-block latency contribution satisfies $\Theta^{\rm curr}_k \leq \eta$, and the likelihood that the peak control latency satisfies $\Theta^{\rm pcl}_k \leq \eta$. Accordingly, the cost function for block $k$ is defined as
\begin{align}
J(\delta_{\rm{B}_k},\delta_{\rm{S}_k},\delta_{\rm{C}_k})
=
P_{{\rm O}_k}
+
\rho_1 P_{\Theta^{\rm curr}_k}(\eta)
+
\rho_2 P_{\Theta^{\rm pcl}_k}(\eta),
\label{cost_new}
\end{align}
where $\rho_1,\rho_2 \in (0,1)$ are weighting parameters that control the relative importance of controllability and latency-related performance. The optimal policy for block $k$ is obtained by 
\begin{align}\label{eq:opt}    
\!\!\!\!(\delta_{\rm{B}_k}^*,\delta_{\rm{S}_k}^*,\delta_{\rm{C}_k}^*)
=
\arg\max_{\delta_{\rm{B}_k},\delta_{\rm{S}_k},\delta_{\rm{C}_k}\in[0,1]}
J(\delta_{\rm{B}_k},\delta_{\rm{S}_k},\delta_{\rm{C}_k}).
\end{align}
Since the resulting optimization problem is nonconvex and admits no closed-form solution, the optimal access probabilities are obtained numerically via an exhaustive grid search over the feasible parameter space. For each block $k$, all candidate policies are evaluated in terms of controllability, slot success probability, current-block latency contribution, and peak control latency. The CDF terms in the cost function are evaluated empirically over the set of candidate policies within the grid, thereby ranking policies according to their relative performance. The policy that maximizes the cost is then selected for that block, as summarized in Algorithm~\ref{alg:candidate_eval}. 

As a final remark, although intuitively it seems that maximizing per-slot success probability \(p_k\) would also maximize the block controllability probability, the following proposition shows that this is not always true.
\begin{proposition}\label{proposition1}
Maximizing the per-slot transmission success probability within a block is not equivalent to maximizing the probability of achieving a run of \(v\) consecutive successful transmissions within that block.
\end{proposition}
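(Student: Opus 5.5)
The plan is to disprove the equivalence with an explicit family of access policies. Within this family, the average per-slot success probability of a not-yet-controllable typical controller stays fixed, yet the first-time controllability probability $\pi_k$ of Theorem~\ref{lem:Chi} changes strictly. Fix a block $k$ and $\delta_{{\rm C}_k}$. From \eqref{lambdas}, the total active density is
\[
\lambda_{\rm eff}=\lambda\big[(1-P_{{\rm O}_{k-1}})\,a+P_{{\rm O}_{k-1}}\delta_{{\rm C}_k}\big],\qquad a\triangleq\delta_{{\rm B}_k}+(1-\delta_{{\rm B}_k})\delta_{{\rm S}_k}.
\]
So by Lemma~\ref{lemma:1}, $\varrho_k$ depends on $(\delta_{{\rm B}_k},\delta_{{\rm S}_k})$ only through $a$. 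The unconditional per-slot success probability of a controller in $\Phi^-$ is $\delta_{{\rm B}_k}\varrho_k+(1-\delta_{{\rm B}_k})\delta_{{\rm S}_k}\varrho_k=a\varrho_k$, which is also a function of $a$ alone. Hence every pair $(\delta_{{\rm B}_k},\delta_{{\rm S}_k})$ with the same $a$ yields the same interference and the same per-slot success probability.

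Next I compare the two extreme policies on this level set: pure block access $(\delta_{{\rm B}_k},\delta_{{\rm S}_k})=(a,0)$ and pure slot access $(0,a)$. By \eqref{eq:blockprobability}, they give
\[
\pi_k^{\rm B}=a\,\chi(\varrho_k)+(1-a)\chi(0)=a\,\chi(\varrho_k),\qquad \pi_k^{\rm S}=\chi(a\varrho_k),
\]
using $\chi(0)=0$ (no success means no run). It then suffices to exhibit parameters with $\pi_k^{\rm B}\neq\pi_k^{\rm S}$.

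The cleanest instance is $T=v\ge 2$. Then the floor in Lemma~\ref{chiLemma} gives only the $l=1$ term, and $\chi(x)=x^v$; directly, a run of $v$ ones in $v$ slots requires all slots to succeed. This gives $\pi_k^{\rm B}=a\varrho_k^v$ and $\pi_k^{\rm S}=a^v\varrho_k^v$. For any $a\in(0,1)$ and $\varrho_k>0$ we get $\pi_k^{\rm B}>\pi_k^{\rm S}$, although both policies have identical per-slot success probability. Therefore the per-slot success probability does not determine the controllability probability, and a policy that maximizes the former need not maximize the latter.

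For general $T\ge v\ge 2$, I would argue that $\chi(x)/x$ is strictly increasing. Since $\chi(0)=0$, strict convexity of $\chi$ near the origin would suffice, and the leading term of $\chi(x)$ for small $x$ is $(T-v+1)x^v$. Then Jensen's inequality, or the chord inequality $\chi(a\varrho)<a\chi(\varrho)$, gives the same conclusion. This monotonicity is the only potentially delicate step, because the alternating sum in Lemma~\ref{chiLemma} is awkward. I would sidestep it by relying on the $T=v$ case, which already suffices for the proposition. Optionally, I would also note that the maximizer of $a\varrho_k(a)$ over $a$ generally differs from the maximizer of $a\varrho_k(a)^v$, since $\varrho_k$ decreases in $a$.
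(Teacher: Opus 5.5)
Your proof is correct and is essentially the paper's argument. Your level-set parameter $a$ is the paper's effective access probability $d$, and you compare the same pure-block policy $(a,0)$ against the pure-slot policy $(0,a)$ with $T=v$, so that $\chi(x)=x^v$; you only generalize from $k=1$, $T=v=2$ to arbitrary $k$ and $T=v\ge 2$.

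One caution on your optional remark: $\chi(x)/x$ is not increasing for general $T\ge v\ge 2$. For example, $T=5$, $v=2$ gives $\chi(x)=4x^2-3x^3-x^4+x^5$, and $\chi(0.8)/0.8\approx 1.18>1=\chi(1)/1$. So with $a=0.8$ and $\varrho_k$ close to one, pure slot access actually yields the larger $\pi_k$. Convexity near the origin only gives the chord inequality for small $\varrho_k$. Since you explicitly do not rely on this remark, the proof stands.
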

\begin{proof}
See Appendix~\ref{AppendixE}.
\end{proof}
Although the controllability metric and latency-related performance may align under certain parameter settings (see Section~\ref{sec:numerical}), this behavior is not universal. In particular, the current-block latency contribution \(\Theta^{\rm curr}_k\) and the controllability probability depend on different underlying mechanisms, namely the slot success probability \(p_k\) and the run-length success probability \(\chi(p_k)\), respectively. As a result, in other regimes, such as larger values of \(v\), higher interference density \(\lambda\), or different block lengths \(T\), the optimal policies for these objectives can diverge. Therefore, joint optimization of controllability and latency via the unified CDF-based cost function remains essential.

\begin{algorithm}[t]
\caption{Access Policy Optimization for Block \(k\)}
\label{alg:candidate_eval}
\begin{algorithmic}[1]
\Require Block index \(k\), \(P_{{\rm O}_{k-1}}\), histories $\{p_1,\dots,p_{k-1}\}$, $\{P_{\tilde O_1},\dots,P_{\tilde O_{k-1}}\}$, $\{\chi_{C_1},\dots,\chi_{C_{k-1}}\}$
\Ensure Optimal access policy \((\delta_{\rm{B}_k}^*,\delta_{\rm{S}_k}^*,\delta_{\rm{C}_k}^*)\) and all performance metrics for block \(k\)

\For{each candidate triplet \((\delta_{\rm{B}_k},\delta_{\rm{S}_k},\delta_{\rm{C}_k}) \in [0,1]^3\)}


    \State Compute effective interference density \(\lambda_{\rm eff}\) using \eqref{lambdas}

    \State Compute conditional success probability \(\varrho_k\) using Lemma~\ref{lemma:1}

    \State Compute regime-specific slot-success probabilities $p_{k,\phi}$ using \eqref{eq:p_kphi}


    \State Compute state-transition probability \(\pi_k\) using \eqref{eq:blockprobability}, \(P_{{\rm O}_{k}}\) and \(P_{\tilde{\rm{O}}_k}\) using \eqref{eq:state} and \eqref{eq:inst}


    \State Compute \(\Theta^{\rm curr}_k\) using \eqref{eq:Thetacurr}-\eqref{rff}

    \State Compute peak control latency using \eqref{eq:CA_pmf_exact}-\eqref{eq:E_CA_exact}

    \State Compute joint probabilities 
    \(P_{\Theta^{\rm curr}_k}(\eta)\) and 
    \(P_{\Theta^{\rm pcl}_k}(\eta)\) using \eqref{eq:jointtheta_curr} and \eqref{control_age} and cost $J$

\EndFor

\State Select the candidate that maximizes \(J\)

\State Return the optimal policy and corresponding metrics

\end{algorithmic}
\end{algorithm} 
\section{Numerical Results}\label{sec:numerical}
We evaluate the proposed framework to characterize the trade-offs between controllability and latency under the current-block optimization model. Unless otherwise specified, the system parameters are fixed as \(T=5\), \(\lambda=10^{-4}\,\mathrm{m}^{-2}\), \(\alpha=3\), \(\gamma=0.1\), transmit power \(\xi=40\) dBm, noise power \(N_o=10^{-17}\) and latency threshold \(\eta=3\). The optimal access probabilities \((\delta_{\rm{B}_k}, \delta_{\rm{S}_k}, \delta_{\rm{C}_k})\) are obtained via exhaustive grid search over \([0,1]^3\) by maximizing the cost function. 
\begin{figure*}[hpt]
	\centering
        \subfloat[]{
        \!\!\!\!\!\includegraphics[trim=1cm 0cm 0cm 0.8cm,width=0.34\textwidth, height=4.2cm]{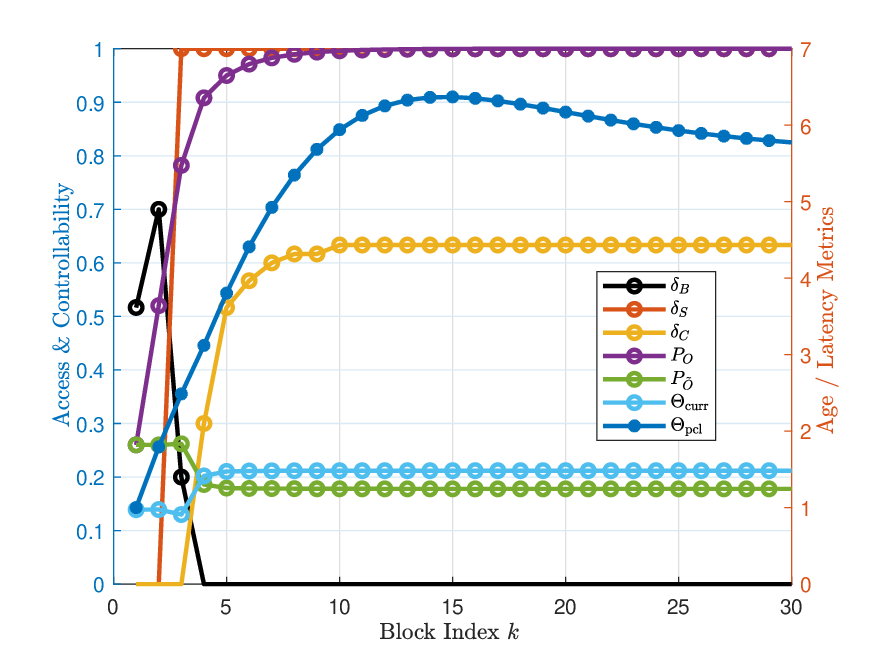}
        }
        \subfloat[]{
   		\!\!\!\!  \includegraphics[trim=0.8cm 0cm 0.2cm 0.8cm,width=0.34\textwidth, height=4.2cm]{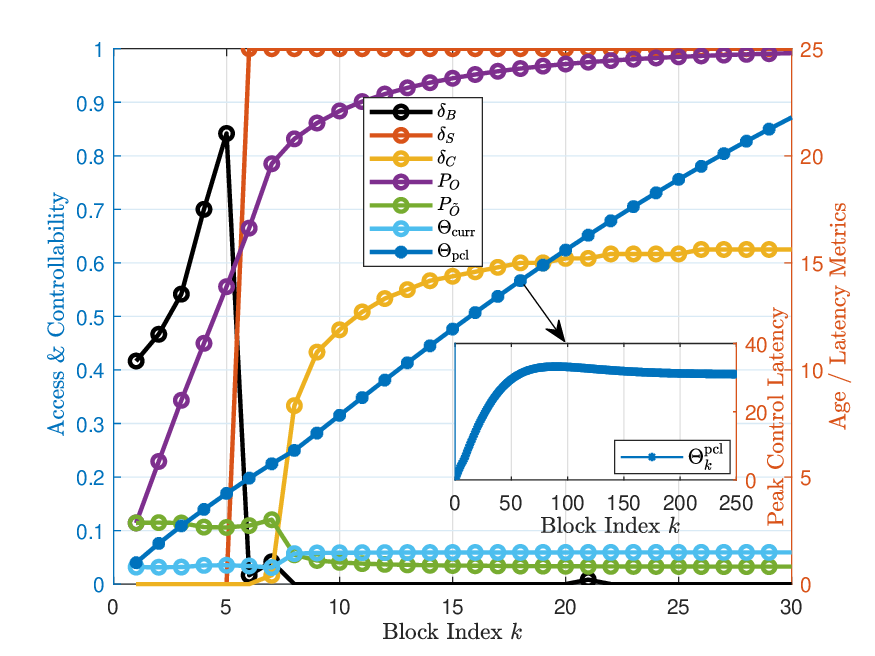}
        }
        \subfloat[]{
   		\includegraphics[trim=0.8cm 0cm 1cm 0.6cm,width=0.28\textwidth, height=4.2cm]{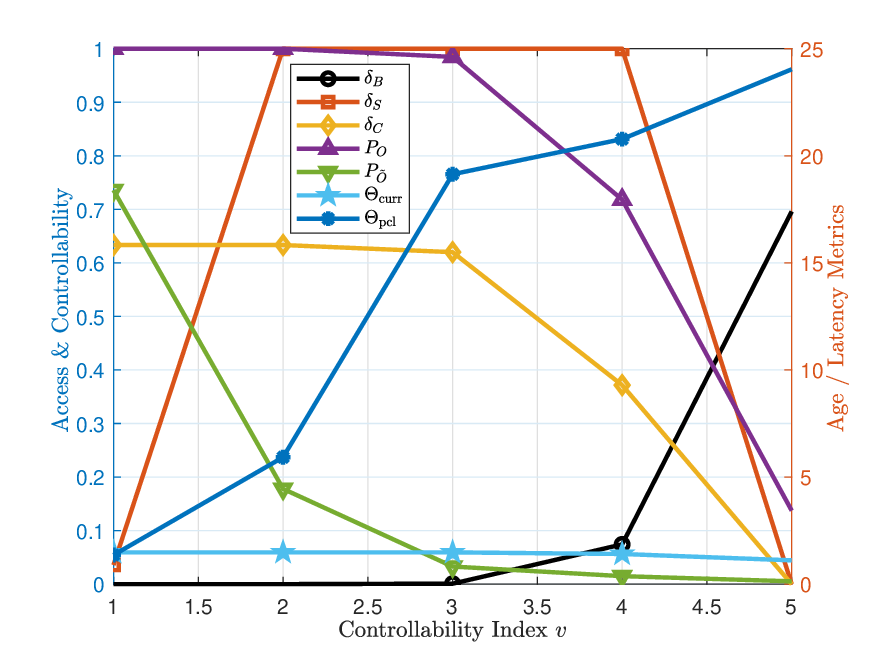}
        }
	\caption{Evolution of optimal access probabilities, controllability, and age/latency metrics over (a) blocks for $v=2$, (b) blocks for $v=3$, and (c) controllability index $v$.}
	\label{fig_abc}
\end{figure*}
\begin{figure}[hpt]
	\centering
   		\!\!\!\!\!\!\!\!\!\! \!\includegraphics[trim=0cm 0.4cm 0cm 0.8cm, height=4.2cm]{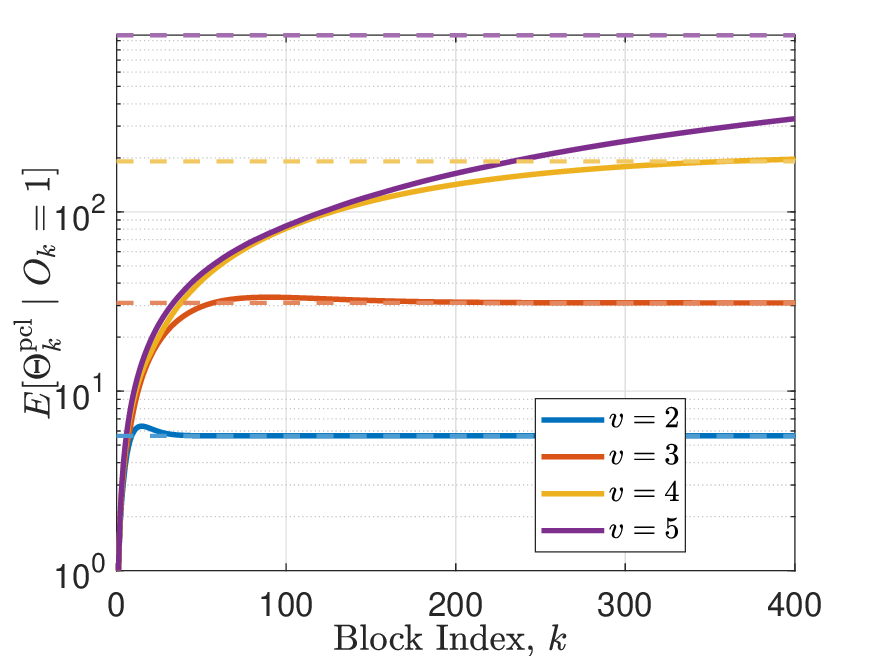}
	\caption{Expected peak control latency (log-scale y-axis) versus block index for $v=2,3,4,5$, showing higher latency for larger $v$.}
	\label{fig_ab}
\end{figure}

Figs.~\ref{fig_abc}(a) and~\ref{fig_abc}(b) illustrate the evolution of the optimal access probabilities, controllability metrics, and latency measures over blocks for controllability indices \(v=2\) and \(v=3\), respectively. The system exhibits two distinct regimes, a pre-controllability phase, where \(P_{{\rm O}_{k}}<1\), and a post-controllability phase, where \(P_{{\rm O}_{k}}\to 1\). In the pre-controllability phase, the optimizer prioritizes maximizing the state-transition probability \(\pi_k\), which drives the system toward controllability. As a result, the block-access probability \(\delta_{\rm{B}_k}\) increases rapidly in the initial blocks. Once the controllability probability approaches unity, i.e., \(P_{{\rm O}_{k}}\approx 1\), the system transitions to the post-controllability regime. In this regime, \(\delta_{\rm{B}_k}\) becomes inactive and drops to zero, while \(\delta_{\rm{S}_k}\) saturates to one. The system dynamics are then governed solely by the post-controllability access probability \(\delta_{\rm{C}_k}\), which balances interference and controllability. The controllability probability \(P_{{\rm O}_{k}}\) increases significantly faster for smaller values of \(v\), since achieving a run of \(v\) consecutive successes becomes more difficult as \(v\) increases. Consequently, for \(v=2\), controllability is achieved within a few blocks, whereas for \(v=3\), the transition is delayed. The instantaneous controllability probability \(P_{\tilde{\rm{O}}_k} = \chi(\delta_{\rm{C}_k}\varrho_k)\) attains a higher steady-state value for smaller \(v\), reflecting the reduced difficulty in achieving shorter success runs.
The current-block latency contribution \(\Theta^{\rm curr}_k\) stabilizes rapidly in both cases, since it depends only on the per slot success probabilities within a block. In contrast, the peak control latency \(\Theta^{\rm pcl}_k\) exhibits significantly different behavior. During the initial phase, \(\Theta^{\rm pcl}_k\) increases due to the accumulation of blocks that have not yet achieved controllability. As the system approaches the post-controllability regime, \(\Theta^{\rm pcl}_k\) converges to a steady-state value determined by the steady-state controllability probability.

Fig.~\ref{fig_abc}(c) illustrates the steady-state behavior of the system as a function of the controllability index \(v\). As \(v\) increases, the instantaneous controllability probability \(P_{\tilde{\rm{O}}_k}\) decreases rapidly, reflecting the increasing difficulty of achieving longer runs of consecutive successful transmissions. Correspondingly, the peak control latency \(\Theta^{\rm pcl}_k\) increases significantly with \(v\), indicating a higher delay in achieving controllability. From the access policy perspective, the optimizer adapts by increasing \(\delta_{\rm{B}_k}\) at higher values of \(v\), thereby encouraging more aggressive transmissions in the pre-controllability phase to improve the likelihood of achieving the required success runs. At the same time, \(\delta_{\rm{C}_k}\) decreases with \(v\), reducing interference once the system approaches or reaches controllability. The access probability \(\delta_{\rm{S}_k}\) remains high for moderate values of \(v\) but drops sharply at larger \(v\), indicating a shift in strategy as the system prioritizes controllability over maintaining moderate access. In contrast, the current-block latency \(\Theta^{\rm curr}_k\) remains nearly constant across all values of \(v\), showing that instantaneous latency is relatively insensitive to the controllability requirement. Overall, the figure highlights that increasing \(v\) primarily impacts controllability and long-term control delay, while having minimal effect on the current-block latency.

Fig.~\ref{fig_ab} illustrates the evolution of the conditional peak control latency \(\mathbb{E}[\Theta^{\rm pcl}_k \mid \tilde{O}_k=1]\) over \(K=400\) blocks for \(v\in\{2,3,4,5\}\). For \(v=2\), the latency quickly stabilizes at a low value, indicating that controllability is easily maintained. For \(v=3\), the latency increases initially and then saturates at a moderate value. For larger values of \(v\), such as \(v=4\) and \(v=5\), the latency grows significantly and may exhibit near-linear growth over a long horizon before reaching steady state. This behavior reflects the increasing difficulty of achieving long consecutive success runs in the presence of interference. In steady state, when \(P_{{\rm O}_{k}}\to 1\), all controllers operate in the post-controllability regime with a fixed access probability \(\delta_C^*\) and constant success probability \(\varrho\). Then, the instantaneous controllability probability becomes \(P_{\tilde{\rm{O}}_k} = \chi(\delta_C^*\varrho)\), which is independent of \(k\). Therefore, we obtain
\(
\lim_{k\to\infty}\mathbb{E}[\Theta^{\rm pcl}_k \mid \tilde{O}_k=1]
= \frac{1}{\chi(\delta_C^*\varrho)}
\) \cite{9360520}.
This theoretical asymptotic value is represented by the horizontal dashed lines in Fig.~\ref{fig_ab}. As \(k\) becomes large, the curves converge perfectly to these lines, confirming that the numerical results match the derived steady-state limit.
\section{Conclusion}
This paper investigated block-structured wireless control systems where multiple controller–actuator pairs share a random access channel using adaptive Aloha-based transmission. We introduced three adaptive access probabilities for block access, pre-controllability slot access, and post-controllability slot access. Analytical expressions were derived for block controllability probability, peak latency between inter-block successes, peak age of information, and peak control latency. Using a CDF-based optimization framework, we jointly balanced controllability and freshness-related metrics by selecting optimal access probabilities. Numerical results demonstrated the effectiveness of the proposed policies and highlighted the trade-off between reliable control and timely information transmission in interference-limited wireless control networks.
\appendices
\section{Proof of Lemma~\ref{lemma:1}}
\label{AppendixA}
For any $\phi$, the set of interfering controllers is $\bar{\Phi}_t=\{j\in\Phi_{\rm B}\cup\Phi_{\rm S}\cup\Phi_{\rm C}\setminus\{0\}: \mathbb{I}_{j}(S_{t})=1\}$. Here, the indicator $\mathbb{I}_{j}(S_{t})=1$ if controller $j$ transmits in slot $t$. The process $\bar{\Phi}_t$ follows a PPP with density $\lambda_{\rm eff}$.  
Now, using \eqref{eq:2}, 
\begin{equation*}
\begin{aligned}
\varrho_{k}&= \mathbb{P}\!\left( \!\frac{\xi  |h_{0}(t)|^{2}r_{0}^{-\alpha} }{
N_{o} +\sum_{j\in \bar{\Phi}_{t}}\xi  |h_{j}(t)|^{2}r_{j}^{-\alpha}} >  \gamma  \!\right)\\
&= \mathbb{P}\left(  |h_{0}(t)|^{2}>
\frac{\gamma N_{o}}{\xi  r_{0}^{-\alpha}} + \sum_{j\in \bar{\Phi}_{t}}|h_{j}(t)|^{2}\gamma\frac{r_j^{-\alpha}}{r_{0}^{-\alpha}}  \right)\\
&= \exp\left(-\frac{\gamma N_{o}}{\xi  r_{0}^{-\alpha}}\right)\mathbb{E}\left[ e^{- \sum_{j\in \bar{\Phi}_{t}}|h_{j}(t)|^{2}\gamma\frac{r_j^{-\alpha}}{r_{0}^{-\alpha}} }\right]\\
\end{aligned}
\end{equation*}
because $|h_0(t)|^2\!\sim\!\exp(1)$. Further, since $|h_{j}(t)|^{2}\gamma\frac{r_j^{-\alpha}}{r_{0}^{-\alpha}}\sim\exp\left(\frac{r_0^{-\alpha}}{\gamma r_{j}^{-\alpha}}\right)$, and $\bar{\Phi}_t$ follows a PPP with density $\lambda_{\rm eff}$, we simplify the above expression using probability generating functions~\cite{9013286} to obtain the desired result.
\section{Proof of Lemma~\ref{chiLemma}}  
\label{Appendix:Chi}
It suffices to show that $\chi(x)$ is the probability that a Bernoulli sequence $\{Y(t)\in\{0,1\}\}_{t=1}^T$ with mean $x$ contains a run of at least $v$ ones. 
For $i'=1,2,\dots,T-v+1$, we define the event $\mathcal{A}_{i'} \triangleq \{Y(i')=Y(i'+1)=\cdots=Y(i'+v-1)=1\}$, corresponding to a run of $v$ consecutive successes starting at slot~$i'$. Then, $\chi(x)=\mathbb{P}\Big(\bigcup_{i'=1}^{T-v+1} \mathcal{A}_{i'}\Big).$ Simply summing the probabilities $\mathbb{P}(A_{i'})$ overcounts outcomes with multiple runs. By the principle of inclusion-exclusion,
\begin{equation}
\chi(x)
=
\sum_{l=1}^{T-v+1}
(-1)^{l+1}
\!\!\!\!\!
\sum_{1\le i'_1<\cdots<i'_l\le T-v+1}
\!\!\!\!\!
\mathbb{P}(A_{i'_1}\cap\cdots\cap A_{i'_l}).
\label{eq:IE}
\end{equation}
The first term $(l=1)$ adds the probabilities of individual runs. The second term $(l=2)$ subtracts the probabilities of pairs of runs that occur together, correcting for double counting. Higher-order terms alternately add and subtract probabilities of three or more simultaneous runs. To ensure $l$ runs of length $v$ are distinct and non-overlapping, at least one empty slot must separate consecutive runs. Thus, $(l-1)$ gaps are needed among $lv$ occupied slots, so $lv + (l-1) \le T$, which gives $l \le \left\lfloor \frac{T+1}{v+1} \right\rfloor$. For any admissible selection of $l$ nonoverlapping runs, the joint probability that all $l$ runs occur is
\begin{equation}\label{eq:PA_prob}
\mathbb{P}(A_{i'_1}\cap\cdots\cap A_{i'_l})
=
x^{lv}(1-x)^{l-1},
\end{equation}
where $x^{lv}$ accounts for $lv$ successful slots within the runs, and
$(1-x)^{l-1}$ ensures separation between consecutive runs.
After allocating $lv$ slots to the runs, remaining $T-lv$ slots can be
distributed among the $(l-1)$ separating gaps, yielding
$\binom{T-lv}{l-1}$ admissible placements. Runs starting at the first slot or ending at the last slot create boundary effects because they need a separating failure on only one side, whereas interior runs require failures on both sides. In line with classical run distributions (e.g., de Moivre’s run problem), this yields the correction factor $x+\frac{T-lv+1}{l}(1-x)$, where $X_{k}$ accounts for boundary-attached runs (those starting at the first slot or ending at the last slot). The term $\frac{T-lv+1}{l}(1-x)$ accounts for interior runs: $T-lv+1$ is the number of admissible starting positions, division by $l$ averages over the $l$ runs, and $(1-x)$ enforces a separating failure. Substituting this into \eqref{eq:IE} and \eqref{eq:PA_prob} yields Lemma~\ref{chiLemma}.
\section{Proof of Theorem~\ref{Theo:1}}  
\label{AppendixB}
For a given block $k$ with success, let $k-\kappa$ for $\kappa \geq 1$ denote the prior block with at least a successful transmission. Then,  
the latency can be decomposed as
$\mathcal{L}^{\rm P}_{\kappa,k}
=
W_{k-\kappa}
+
T(\kappa-1)
+
X_k
+
1.$
Here, $X_{k}$ is the length of the first run of failures at the start of the block, and $W_{k-\kappa}$ is the length of the last run of failures at the end of the block (see Fig.~\ref{3statement2.drawio}).

Conditioned on $Z(k- \kappa)=1$ and $Z(k)=1$, we have  
\begin{multline*}
\mathbb{E}[\mathcal{L}^{\rm{P}}_{\kappa,k}|Z(k- \kappa)=1, Z(k)=1] 
=\mathbb{E}[W_{k-\kappa} \mid Z(k-\kappa)=1]  \\
+
T(\kappa-1)
+
\mathbb{E}[X_k \mid Z(k)=1]
+
1.
\end{multline*}
Applying the law of total expectation over $\kappa$, we derive
\begin{align}
\Theta^{\rm pl}_{k}&= \mathbb{E}\left[\mathcal{L}^{\rm P}_{\kappa,k} \mid Z(k)=1\right] \notag\\&= 
\sum_{\kappa=1}^{k} 
\mathbb{P}(\kappa \mid Z(k)=1) \mathbb{E}[\mathcal{L}^{\rm P}_{\kappa,k} \mid Z(k-\kappa)=1,    Z(k)=1]\notag\\
&= \sum_{\kappa=1}^{k} 
\mathbb{E}\left[W_{k-\kappa} \mid Z(k-\kappa)=1\right]\mathbb{P}\left(\kappa \mid Z(k)=1\right)\notag\\
&\quad T\mathbb{E}[\kappa \mid Z(k)=1]-T + \mathbb{E}\left[X_k \mid Z(k)=1\right]
+
1,
\label{eq:latency_total}
\end{align}
where $\mathbb{P}\!\left(\kappa \!\mid \!Z(k)\!=\!1\right)=\mathbb{P}\!\left(Z(k-\kappa)=1 \!\mid \!Z(k)\!=\!1\right)$.
The rest of the proof computes probability and expectations in~\eqref{eq:latency_total}. 
\subsubsection{Expected slots after last success within a block, $W_{k-\kappa}$}
Since slot outcomes are i.i.d. Bernoulli with success probability $p_{k-\kappa}$ and failure probability $q_{k-\kappa}$, the last success is followed by $n_w$ failures if the final $n_w$ slots are failures and the preceding slot is a success. Hence, for $0\le n_w\le T-1$,
we have $P(W_{k-\kappa}=n_w \mid Z(k-\kappa)=1)
=
\frac{p_{k-\kappa}\, q_{k-\kappa}^{\,n_w}}
{1-q_{k-\kappa}^T}.$
Therefore, we deduce
\begin{align}
\mathbb{E}[W_{k-\kappa} \mid Z(k-\kappa)=1]&=
\sum_{n_w=0}^{T-1}
n_w
\frac{p_{k-\kappa}\, q_{k-\kappa}^{\,n_w}}
{1-q_{k-\kappa}^T}\notag\\
&=
\frac{
q_{k-\kappa}
-
T q_{k-\kappa}^T
+
(T-1)q_{k-\kappa}^{T+1}
}{
p_{k-\kappa}
\left[1-q_{k-\kappa}^T\right]
}\notag\\
& = \frac{q_{k-\kappa}}{p_{k-\kappa}}-
\frac{T q_{k-\kappa}^{T}
}{1-q_{k-\kappa}^T
}
.\label{eq:exp_1}
\end{align}
\subsubsection{Expected number of blocks between two consecutive successful blocks}
Let $\kappa$ be the number of blocks since the most recent success before block $k$, so the last success occurs in block $k-\kappa$. Since the system resets after each success and the channel is i.i.d., the inter-success epochs form an i.i.d. renewal process. For $\kappa \ge 1$, if the last success is in block $k-\kappa$, the next in block $k$, and all intermediate blocks fail, i.e., $Z(k-\kappa+1)=\dots=Z(k-1)=0$, then the joint probability that blocks $k-\kappa$ and $k$ each contain at least one success is 
\begin{multline*}
    \mathbb{P}(Z(k-\kappa)=1,Z(k)=1)
\\= (1-q_{k-\kappa}^T)(1-q_k^T)\Bigg[\prod_{i=k-\kappa+1}^{k-1}q_i^T\Bigg].
\end{multline*}
Thus, the conditional probability that the previous successful block is $k-\kappa$ given that block $k$ is successful is
\begin{align}
\mathbb{P}(\kappa \mid Z(k)=1)
&=
\frac{\mathbb{P}(Z(k-\kappa)=1, Z(k)=1)}
{\mathbb{P}(Z(k)=1)}    \notag\\&
=
\bigl[1-q_{k-\kappa}^T\bigr]
\prod_{i=k-\kappa+1}^{k-1}q_i^T,   
\label{eq:prob1}
\end{align}
where $\kappa \ge 1$. The case $\kappa = k$ corresponds to the situation where no successful block occurred prior to block $k$. This case is well-defined since we define $\tilde{O}_0=1$, which implies that block $0$ acts as a virtual successful block. Consequently, the support of $\kappa$ is $\{1,2,\dots,k\}$. The conditional expectation of $\kappa$ given $Z(k)=1$ is
\begin{align}
    \mathbb{E}[\kappa|Z(k)=1]&=\sum_{\kappa=1}^{k}\!\kappa  \Bigg[(1\!-\!q_{k- \kappa}^T) \prod_{i=k- \kappa+1}^{k-1}q_i^T\Bigg].
\label{eq:kappa:}
\end{align}
\subsubsection{Expected length of the initial failure run within a block}
Since the block contains at least one success, $X_{k}$ can take values in $\{0,1,\dots,T-1\}$. If the first $n_x$ slots are failures and the $(n_x+1)$-th slot is a success,
then $$\mathbb{P}(X_{k}=n_x) = q_k^{\,n_x} p_k,$$ for $n_x=0,1,\dots,T-1,$
where $p_k$ and $q_k=1-p_k$ denote the slot success and failure probabilities, respectively. 
Conditioning on $Z(k)=1$ yields  $$\mathbb{P}(X_{k}=n_x \mid Z(k)=1)
=
\frac{q_k^{\,n_x} p_k}{1-q_k^T},
\quad n_x=0,1,\dots,T-1. $$  
Then, similar to \eqref{eq:exp_1}, the conditional expectation of $X_{k}$ simplifies to
\begin{align}
\mathbb{E}[X_{k}\mid Z(k)=1]
&= \sum_{n_x=0}^{T-1} n_x \frac{q_k^{\,n_x} p_k}{1-q_k^T}
=\frac{
q_k}{
p_k}-\frac{
Tq_k^{T}
}{
1-q_k^T
}. 
\label{eqX}
\end{align}
\subsubsection{Expected peak latency for the first input in the block}
Substituting \eqref{eq:exp_1} and \eqref{eq:prob1} into the first term of \eqref{eq:latency_total}, we obtain
\begin{align}
&\sum_{\kappa=1}^{k}
\mathbb{E}\!\left[W_{k-\kappa}\mid Z(k-\kappa)=1\right]
\mathbb{P}\!\left(\kappa \mid Z(k)=1\right)
\notag\\
&
=\!
\sum_{\kappa=1}^{k}\!
\left[
\prod_{i=k-\kappa+1}^{k-1}\!\!\!\!\!q_i^T
\right]
\frac{q_{k-\kappa}}{p_{k-\kappa}}
(1\!-\!q_{k-\kappa}^T)
\!-\!
T
\sum_{\kappa=1}^{k}\!
\left[
\prod_{i=k-\kappa}^{k-1}\!\!\!q_i^T
\right].
\label{eq:step2}
\end{align}
Substituting ~\eqref{eq:kappa:}-\eqref{eq:step2} into \eqref{eq:latency_total}, we obtain
\begin{align*}   
\Theta^{\rm pl}_{k}
&\!=\!
\sum_{\kappa=1}^{k}\!
\left[
\prod_{i=k-\kappa+1}^{k-1}\!\!\!q_i^T
\right]
\!\frac{q_{k-\kappa}}{p_{k-\kappa}}
(1\!-\!q_{k-\kappa}^T)
\!-\!
T
\sum_{\kappa=1}^{k}\!
\left[
\prod_{i=k-\kappa}^{k-1}\!\!q_i^T
\right]
\notag\\
&\quad
+\!
T\!
\sum_{\kappa=1}^{k}\!
\kappa
(1\!-\!q_{k-\kappa}^T)
\left[
\prod_{i=k-\kappa+1}^{k-1}\!\!\!\!q_i^T
\right]
\!-\!T
\!+\!
\frac{q_k}{p_k}
\!-\!
\frac{Tq_k^T}{1-q_k^T}
\!+\!1\\
&=
\sum_{\kappa=1}^{k}
(1-q_{k-\kappa}^T)
\left[
\prod_{i=k-\kappa+1}^{k-1}q_i^T
\right]
\left[
\frac{q_{k-\kappa}}{p_{k-\kappa}}
+
T\kappa
\right]
\notag\\
&\quad
-
T
\sum_{\kappa=1}^{k}
\left[
\prod_{i=k-\kappa}^{k-1}q_i^T
\right]
\!-\!T
+ \bigg(
\frac{q_k}{p_k}
\!-\!
\frac{Tq_k^T}{1-q_k^T} \!\bigg)
\!+\!1.  
\end{align*}
Hence, we arrive at \eqref{Theorem1} and the proof is complete.
\section{Proof of Theorem~\ref{Theo:2}} 
\label{AppendixC}

The proof is similar to that of Theorem~\ref{Theo:1} in Appendix~\ref{AppendixB}.
Conditioned on $\kappa$ and $Z(k)=1$, we have $$\mathbb{E}\!\left[\Delta^{\rm P}_{\kappa,k} \mid \kappa, Z(k)=1\right]
=
\kappa T
+
\mathbb{E}[X_{k}\mid Z(k)=1]
+
1.$$
Applying the law of total expectation over $\kappa$, we obtain 
\begin{align*}
\Theta^{\rm pa}_{k}  &= \mathbb{E}\left[\Delta^{\rm P}_{\kappa,k} \mid Z(k)=1\right]
\\&=  
\sum_{\kappa=1}^{k}
\mathbb{P}\left(\kappa \mid Z(k)=1\right)  
\left(
\kappa T
+
\mathbb{E}[X_{k}\mid Z(k)=1]
+
1
\right)\\
&=T\,\mathbb{E}[\kappa \mid Z(k)=1]
+
\mathbb{E}[X_{k}\mid Z(k)=1]
+
1.
\end{align*}
Substituting $\mathbb{E}[\kappa \!\mid \!Z(k)\!=\!1]$ from~\eqref{eq:kappa:} and 
$\mathbb{E}[X_{k}\!\mid \!Z(k)\!=\!1]$ from ~\eqref{eqX}, we obtain the closed-form expression for \(\Theta^{\rm pa}_k\).
\begin{align}
\!\!\!\!\!\Theta^{\rm pa}_{k}
\!=\!
T
\sum_{\kappa=1}^{k} \!
\kappa
(1\!-\!q_{k-\kappa}^T) \!\!\!\!
\prod_{i=k-\kappa+1}^{k-1}\!\!\!\!q_i^T
\!+\!
\bigg( \frac{q_k}{p_k}
\!-\!
\frac{Tq_k^T}{1-q_k^T} \bigg) \!
+ \!
1.
\label{eq:pa_final}
\end{align}
Hence, the theorem is proved.
\section{Proof of Theorem~\ref{thm:controllatency}} 
\label{AppendixD}
From Definition~\ref{def:pcontrolage}, for any controllable block $k$, we get
 \begin{align*}
     \mathbb{P}(\Theta^{\rm pcl}_k\!=\!\tau| \tilde{O}_k=1 ) \\
&\hspace{-3cm} = \mathbb{P}(
\tilde{O}_{k-\tau}\!=\!1,
\tilde{O}_{k-\tau+1}\!=\!\dots\!=\!
\tilde{O}_{k-1}\!=\!0,
\tilde{O}_k\!=\!1)/\mathbb{P}(\tilde{O}_k\!=\!1)\\
&\hspace{-3cm}=P_{\tilde{\rm{O}}_{k-\tau}}\mathbb{P}(
\tilde{O}_{k-\tau+1}\!=\!\dots\!=\!
\tilde{O}_{k-1}\!=\!0,\tilde{O}_k\!=\!1|
\tilde{O}_{k-\tau}\!=\!1
)/P_{\tilde{\rm{O}}_k}.
\end{align*}
If $\tilde{O}_{k-\tau}=1$, all subsequent blocks $i>k-\tau$ are in the post-controllability regime with slot access $\delta_{{\rm C}_i}$, implying $\mathbb{P}(\tilde{O}_{i}=1|\tilde{O}_{k-\tau}=1,\tilde{O}_{k-\tau+1},\dots,\tilde{O}_{i-1}) =\chi(\delta_{{\rm C}_i}\varrho_i) $. Therefore, using the chain rule, we deduce 
\begin{equation*} 
\mathbb{P}(\Theta^{\rm pcl}_k\!=\!\tau| \tilde{O}_k=1)
\!=\!\frac{P_{\tilde{\rm{O}}_{k-\tau}}}{P_{\tilde{\rm{O}}_k}}
\!\!\prod_{i=k-\tau+1}^{k-1}\big(1-\chi(\delta_{{\rm C}_i}\varrho_i)\big)\chi(\delta_{{\rm C}_k}\varrho_k).
\end{equation*}
Computing the denominator $P_{\tilde{\rm{O}}_k}$ from the law of total probability by
conditioning on the previous controllable state $k-\kappa$, we derive \eqref{eq:CA_pmf_exact}. Then, Theorem~\ref{lem:Chi} implies
\begin{align*} 
P_{\tilde{\rm{O}}_k}
&= \sum_{o=0,1}\mathbb{P}(O_{k-1}=o)\mathbb{P}(\tilde{O}_k=1|O_{k-1}=o)\\
&=(1-P_{{\rm O}_{k-1}})\pi_k
+
P_{{\rm O}_{k-1}}\chi(\delta_{{\rm C}_k}\varrho_k),
\end{align*}
as $\mathbb{P}(\tilde{O}_k=1 \mid O_{k-1}=1) = \chi(\delta_{{\rm C}_k}\varrho_k)$, deriving \eqref{eq:inst}.

\section{Proof of Proposition~\ref{proposition1}} 
\label{AppendixE}

We prove the claim by counterexample, focusing on the first block \(k=1\), where $P_{{\rm O}_0}=0,$
so only the pre-controllability access probabilities \((\delta_{{\rm B}_1},\delta_{{\rm S}_1})\) are active.
Consider the special case \(T=v=2\). In this case, block controllability requires two consecutive successful transmissions, and so $\chi(x)=x^2.$

Let \(d = \delta_{{\rm B}_1} + (1-\delta_{{\rm B}_1})\delta_{{\rm S}_1}\) denote the effective access probability. From the expressions for the spatial densities in~\eqref{lambdas} (with \(P_{{\rm O}_0}=0\) and \(\lambda_{{\rm C}_1}=0\)), the total effective density of transmitting controllers is
\begin{align*}\lambda_{\rm eff} 
&= \lambda_{{\rm B}_1} + \lambda_{{\rm S}_1} 
= \delta_{{\rm B}_1}\lambda + (1-\delta_{{\rm B}_1})\delta_{{\rm S}_1}\lambda \\
&= \bigl[\delta_{{\rm B}_1} + (1-\delta_{{\rm B}_1})\delta_{{\rm S}_1}\bigr]\lambda 
= d\lambda.
\end{align*}
All terms in \(\varrho_1\) except \(\lambda_{\rm eff}\) are fixed system constants. Therefore, \(\varrho_1\) depends on the access probabilities only through \(\lambda_{\rm eff}\), which is uniquely determined by \(d\). Consequently, any two policies that yield the same value of \(d\) produce exactly the same conditional slot-success probability \(\varrho_1\) by Lemma~\ref{lemma:1}.
We consider the following two access policies with the same effective access probability \(d \in (0,1)\), thus 
\[ \text{ Policy A for block access is given by } \delta_{{\rm B}_1}=d,
\quad
\delta_{{\rm S}_1}=0. \]\[ 
\text{ Policy B for slotted access is given by }
\delta_{{\rm B}_1}=0,
\quad
\delta_{{\rm S}_1}=d. \]
Under Policy A, the controller accesses the entire block with probability \(d\). Hence,
$\pi_A = d\,\chi(\varrho_1) = d\,\varrho_1^2.$
Under Policy B, the controller accesses each slot independently with probability \(d\). Thus, each slot is successful with probability \(d\varrho_1\). Since the two slots are independent,
$\pi_B = \chi(d\varrho_1) = (d\varrho_1)^2 = d^2\varrho_1^2.$
Because \(0 < d < 1\), we have \(d^2 < d\), which implies \(\pi_B < \pi_A\).
Thus, two policies with the same per-slot transmission success probability can yield different block controllability probabilities. Therefore, maximizing the per-slot transmission success probability is not equivalent to maximizing block controllability.

\bibliographystyle{IEEEtran}      
\bibliography{references}
\end{document}